\newtheorem*{rep@theorem}{\rep@title}
\newcommand{\newreptheorem}[2]{%
\newenvironment{rep#1}[1]{%
 \def\rep@title{#2 \ref{##1}}%
 \begin{rep@theorem}}%
 {\end{rep@theorem}}}
\newtheorem{theorem}{Theorem}[section]
\newtheorem{lemma}[theorem]{Lemma}
\newcommand{\E}{\mathbb{E}}
\DeclareMathOperator*{\argmax}{arg\,max}
\DeclareMathOperator*{\argmin}{arg\,min}
\renewcommand{\paragraph}[1]{\noindent\textbf{#1}.}
\title{Dynamic Averaging Load Balancing on Cycles}
\date{} 
\author{Dan Alistarh \\ IST Austria \and
        Giorgi Nadiradze \\ IST Austria \and
        Amirmojtaba Sabour \\IST Austria}
\begin{document}

\maketitle{}

\begin{abstract}
We consider the following dynamic load-balancing process: 
given an underlying graph $G$ with $n$ nodes, 
in each step $t\geq 0$, one unit of load is created, and placed at a randomly chosen graph node. 
In the same step, the chosen node picks a random neighbor, and the two nodes \emph{balance} their loads by averaging them. 
We are interested in the expected gap between the minimum and maximum loads at nodes as the process progresses, and its dependence on $n$ and on the graph structure. 

Similar variants of the above graphical balanced allocation process have been studied by Peres, Talwar, and Wieder~\cite{PTW}, and by Sauerwald and Sun~\cite{SS} for regular graphs. These authors left as open the question of characterizing the gap in the case of \emph{cycle graphs} in the \emph{dynamic} case, where weights are created during the algorithm's execution. For this case, the only known upper bound is of $\mathcal{O}( n \log n )$, following from a majorization argument due to~\cite{PTW}, which analyzes a related graphical allocation process. 

In this paper, we provide an upper bound of $\mathcal{O} ( \sqrt n \log n )$ on the expected gap of the above process for cycles of length $n$. 
We introduce a new potential analysis technique, which enables us to bound the difference in load between $k$-hop neighbors on the cycle, for any $k \leq n / 2$. 
We complement this with a ``gap covering'' argument, which bounds the maximum value of the gap by bounding its value across all possible subsets of a certain structure, and recursively bounding the gaps within each subset. 
We provide analytical and experimental evidence that our upper bound on the gap is tight up to a logarithmic factor. 
\end{abstract}

\maketitle

\section{Introduction}

We consider balls-into-bins processes where a sequence of $m$ weights are placed into $n$ bins via some randomized procedure, 
with the goal of minimizing the load imbalance between the most loaded and the least loaded bin. 
This family of processes has been used to model several practical allocation problems, such as load-balancing~\cite{ABKU, Mitz01, Richa01}, hashing~\cite{cuckoo1}, or even relaxed data structures~\cite{PODC17, SPAA18}. 

The classic formulation of this problem is known as the $d$-choice process, in each step, a new weight is generated, and is placed in the least loaded of $d$ uniform random choices. 
If $d = 1$, then we have the classic uniform random choice scheme, whose properties are now fully understood. 
In particular, if we place $m = n$ unit weights into the bins, then it is known that the most loaded bin will have expected $\Theta( \log n / \log \log n )$ load, 
whereas if $m = \Omega ( n \log n )$ we have that the expected maximum load is $m / n + \Theta( \sqrt{ m \log n / n } )$. 
Seminal work by Azar, Broder, Karlin, and Upfal~\cite{ABKU} showed that, if we place $n$ unit weights into $n$ bins by the $d$-choice process with $d \geq 2$, then, surprisingly, the 
maximum load is reduced to $\Theta( \log \log n  / \log d)$. 
A technical tour-de-force by  Berenbrink,  Czumaj,  Steger, and Vöcking ~\cite{Berenbrink00} extended this result to the ``heavily-loaded'' case where $m \gg n$, showing that in this case the maximum load is 
$m / n + \log \log n / \log d + O(1)$ with failure probability at most $1 / \textnormal{ poly } n$. An elegant alternative proof for a slightly weaker version of this result  was later provided by Talwar and Wieder~\cite{TW14}.  

More recently, Peres, Talwar, and Wieder~\cite{PTW} analyzed the \emph{graphical} version of this process, 
where the bins are the vertices of a graph, an \emph{edge} is chosen at every step, and the weight is placed at the less loaded endpoint of the edge, breaking ties arbitrarily. 
(Notice that the classic $2$-choice process corresponds to the case where the graph is a clique.) 
The authors focus on the evolution of the gap between the highest and lowest loaded bins, showing that, for $\beta$-regular expander graphs, 
this gap is $O( \log n / \beta )$, with probability $1 - 1 / \textnormal{poly } n$. 

In the \emph{static case}, where each node starts with an arbitrary initial load, and the endpoints \emph{average} their initial loads whenever the edge is chosen, the balancing process can be mapped to a Markov chain, and its convergence is well-understood in terms of the spectral gap of the underlying graph~\cite{SS}. 
Sauerwald and Sun~\cite{SS} considered this static case in the \emph{discrete} setting, where the fixed initial load can only be divided to \emph{integer} tokens upon each averaging step, for which they gave strong upper bounds for a wide range of graph families. 
By contrast, in this paper we consider the less complex \emph{continuous averaging case}, where exact averaging of the weights is possible, but in the more challenging \emph{dynamic} scenario, where weights arrive in each step rather than being initially allocated. 

One question left open by the line of previous work concerns the evolution of the gap in the dynamic case on graphs of low expansion, such as cycles. 
In particular, for cycles, the only known upper bound on the expected gap in the dynamic case is of $O( n \log n )$, following from~\cite{PTW}, 
whereas the only lower bound is the immediate $\Omega( \log n )$ gap lower bound for the clique. 
Closing this gap for cycle graphs is known to be a challenging open problem~\cite{YP}. As suggested in~\cite{PTW}, to deal with the cycle
case, there is a need for a new approach, which takes the structure of the load balancing graph into the account.

\paragraph{Contribution} In this paper, we address this question for the case where averaging is performed on a cycle graph. 
We provide an upper bound on the gap of expected $O( \sqrt{n} \log n )$ in the dynamic, heavily-loaded case, via a new potential argument. 
We complement this result with a lower bound of $\Omega( n )$ on the \emph{square} of the gap, as well as additional experimental evidence suggesting that our upper bound is tight within a logarithmic factor. 
Our results extend to \emph{weighted} input distributions. 

\paragraph{Technical Argument} 
Our upper bound result is based on two main ideas. 
The first introduces a new parametrized hop-potential function, which measures the squared difference in load between any $k$-hop neighbors on the graph, where $k \geq 1$ is a fixed hop parameter. 
That is, if $G = (V, E)$ is our input graph, and $x_i(t)$ is the load at node $i$ at time $t$, then we define the $k$-hop potential as:  
\begin{equation*} 
	\phi_k(t) = \sum_{i=1}^{n} (x_i(t) - x_{i+k}(t))^2.
\end{equation*}

The first technical step in the proof is to understand the expected (``steady-state'') value of the $k$-hop potential. 
We show that, in expectation, the $k$-hop potential has a regular recursive structure on regular graphs. 
While the expected values of $k$-hop potentials cannot be computed precisely, we can isolate upper and lower bounds on their values for cycles. In particular, for the $k$-hop potential on an $n$-cycle, we 
 prove the following bound:
\begin{equation}
\label{eq:intro-upper-bound}
    \E[\phi_k(t)] \le k(n-k)-1, \forall k \geq 1.
\end{equation}

In the second technical step, we shift gears, aiming to bound the \emph{maximum possible value} of the gap between any two nodes, leveraging the fact that we understand the hop potential for any $k \geq 1$. 
We achieve this via a ``gap covering'' technique, which characterizes the maximum value of the gap across all possible subsets of a certain type. 

More precisely, in the case of a cycle of length $n = 2^m$, for each node $i$ and hop count $k$, we define the set family $A_{k}^i$ to be formed of nodes $\{i, i+2^k, i + 2\times 2^k, i + 3\times 2^k, \dots\}$. (Since we are on a cycle, $i = i + 2^{m - k} 2^k$.) 
Then for any $1 \le i \le n$ and $k > 0$, we will have 
\begin{equation}
\sum_{i=1}^n Gap_{A_{k-1}^i}(t) \le 
\sum_{i=1}^n Gap_{A_{k}^i}(t)+\frac{n}{\sqrt{2^{k-1}}}\sqrt{\phi_{2^{k-1}}(t)},
\end{equation}

where $Gap_{X}(t)$ is the maximal gap inside the set $X$ at time $t$. 
Intuitively, this result allows us recursively characterize the gap value at various ``resolutions'' across the graph. 

Finally, we notice that we can ``cover'' the gap across between \emph{any} two nodes by carefully unwinding the recursion in the above inequality, considering all possible subsets of a well-chosen structure, and recursively bounding the gaps within each subset. 
(This step is particularly delicate in the case where $n$ is not a power of two, which we leave to the Appendix.) 
We obtain that
\begin{equation}
\E[Gap(t)] = O(\sqrt{n} \log(n)),
\end{equation}
as claimed. The logarithmic slack is caused by the second term on the right-hand-side of (2). We note that this technique extends to the case where inserted items are \emph{weighted}, where the weights are coming from some distribution of bounded second moment. 

\paragraph{Lower Bound}
It is interesting to ask whether this upper bound is tight. 
To examine this question, we revisit the recursive structure of the $k$-hop potential, which we used to obtain the lower bound in Equation~\ref{eq:intro-upper-bound}. 
We can leverage this structure to obtain a \emph{lower bound} on the expected $k$-hop potential as well. 
Starting from this lower bound, we can turn the upper bound argument ``inside out,'' to obtain a linear lower bound on the \emph{expected squared gap}: 
\begin{equation}
\E[Gap(t)^2] = \Omega(n).
\end{equation}

This second moment bound strongly suggests that our above analysis is tight within logarithmic factors. 
We conjecture that the bound is also tight with regards to the expected gap, and examine this claim empirically in Section~\ref{sec:experiments}. 

\paragraph{Extensions and Overview} 
The analysis template we described above is general, and could be extended to other graph families, such as regular expanders. 
In particular, we note that the recursive structure of the $k$-hop potentials is preserved for such graphs. The main technical steps in analyzing a new graph  family are to (1) identify the right upper bound on the $k$-hop potential (the analogue of (1)); 
and 
(2) identify the right set family for the gap covering argument, and its recursive structure (the analogue of (2)). 
Obtaining tight bounds for these quantities is not straightforward, since they do not seem to be immediately linked to well-studied graph properties. 
Here, we focus on obtaining tight bounds on the gap for cycles, which is technically non-trivial, and leave the extensions for other graph families as future work. 
To substantiate our generality claim, we exhibit an application of our analysis technique to Harary graphs~\cite{Harary} in the Appendix. 

We discuss the relation between our results and bounds for the graphical power-of-two process on a cycle~\cite{PTW} in Section~\ref{sec:discussion}.


\paragraph{Related Work}
As we have already discussed broad background, we will mainly focus on the technical differences from previous work. 
As stated, we are the first to specifically consider the \emph{dynamic} case for \emph{continuous averaging} on cycles. 
In the \emph{static} case with \emph{discrete averaging}, the problem has been considered by Sauerwald and Sun~\cite{SS}. However, their techniques would not apply in our case, since we consider that weights would be introduced \emph{dynamically}, during the processes' execution. 

To our knowledge, the only non-trivial upper bound on the gap of the process we consider which would follow from previous work is of $\mathcal{O}( n \log n )$, by the potential analysis of~\cite{PTW}: they consider 2-choice load balancing, 
and one can  re-do their potential analysis for (continuous) averaging load balancing, yielding the same bounds. However, as our bounds show, the resulting analysis is quite loose in the case of cycles, yielding an $\Omega( \sqrt n )$ gap. This is a consequence of the majorization technique used, which links dynamic averaging on the cycle and a very weak form of averaging on the clique. 

Our potential analysis is substantially different from that of~\cite{PTW}, as they track a sum of exponential potentials across the entire graph. 
By contrast, our analysis tracks the squared load differences between $k$-hop neighbors, establishing recurrences between these potentials. 
We notice that this is also different from the usual square potentials used for analyzing averaging load balancing, e.g.~\cite{Muthu}, which usually compare against the \emph{global mean}, as opposed to pairwise potential differences. Our approach is also different from the classic analyses of e.g.~\cite{ABKU}, which perform probabilistic induction on the number of bins at a given load, assuming a clique. 

Generally, our technique can be seen as performing the induction needed to bound the gap not on the bin loads, as is common in previous work, e.g.~\cite{ABKU}, but \emph{over the topology of the graph}. This approach is in some sense natural, since we wish to obtain tight, topology-specific bounds, but we believe we are the first to propose and analyze it. 

\section{Averaging on the Cycle: Upper Bounding the Gap}

\paragraph{Preliminaries} 
We consider a cycle graph $G = (V, E)$ with $n$ nodes, such that each node $i$ is connected to its left and right neighbors, $i - 1 \mod n$ and $i + 1 \mod n$.  
We consider a stochastic process following real time $t \geq 0$, in which, in each step, a weight $w(t)$ is generated from a same distribution $W$. 
We associate a real-valued \emph{load} value $x_i(t)$ with each node $i$. 
In step $t$, an edge $(i, i + 1)$ is chosen uniformly at random, and the two endpoints nodes update their weights as follows:
\[ 
	x_i(t + 1) = x_{i + 1} (t + 1) =  \frac{x_{i}(t) + x_{i+1}(t) + w(t)}{2}. \\
\]

We will assume that the second moment of the distribution $W$ is bounded. Formally, for each $t \ge 0$ there exists $M^2$ such that $E[W^2] \le M^2$. For simplicity, we will assume that
weights are normalized by $M$. This gives us that for every $t \ge 0$: $\E[W^2] \le 1$.

Let $X(t)=(x_1(t), x_2(t), ..., x_n(t))$ be the vector of the bin weights after $t$ balls have been thrown. First, we define the following potential functions: 
\[
    \forall k \in \{1, 2, \dots, n-1\}:  \phi_k(t) \coloneqq \sum_{i=1}^{n} (x_i(t) - x_{i+k}(t))^2.
\]
Above, we assume that $x_{n+i}(t)=x_i(t)$, for all
$1 \le i \le n$.
Notice that for every $1 \le i \le n$, we have that 
$\phi_i(t)=\phi_{n-i}(t)$. We want to analyze what is the value of these functions in expectation after an additional ball is thrown, for a given
load vector $X(t)$. 

We start with $\phi_1(t+1)$:
\begin{align*}
    \E[\phi_1(t&+1) | X(t),w(t)] = 
    \sum_{i=1}^{n} \frac {1}{n} \Bigg(\Big(\frac{x_{i}(t) + x_{i+1}(t) + w(t)}{2} - x_{i+2}(t)\Big)^2 \\ &\quad \quad \quad \quad \quad \quad \quad \quad + \Big(\frac{x_{i}(t) 
     + x_{i+1}(t) + w(t)}{2} - x_{i-1}(t)\Big)^2 \\ &\quad \quad \quad \quad \quad \quad \quad \quad + \sum_{j \neq i-1, i, i+1}  (x_j(t) - x_{j+1}(t))^2 \Bigg) \\ &=
     \frac{n-3}{n}\phi_1(t) + \frac{1}{2} + \frac{1}{2n}(\phi_1(t) + 2\phi_2(t)) = \frac{n-2}{n} \phi_1(t) + \frac{1}{2}(w(t)^2-\frac{\phi_1(t)}{n}) + \frac{1}{n} \phi_2(t).
\end{align*}

Now, we proceed with calculating the expected value of $\phi_k(t+1)$, for $2 \le k \le \lfloor n/2 \rfloor$:

\begin{align*}
    \E[\phi_k(t&+1) | X_t,w(t)] =
    \sum_{i=1}^{n} \frac {1}{n} \Bigg(\Big(\frac{x_{i}(t) + x_{i+1}(t) + w(t)}{2} - x_{i-k}(t)\Big)^2 
    \\ &\quad \quad \quad \quad \quad \quad \quad \quad + \Big(\frac{x_{i}(t) 
     + x_{i+1}(t) + w(t)}{2} - x_{i+1-k}(t)\Big)^2 
     \\ &\quad \quad \quad \quad \quad \quad \quad \quad +
     \Big(\frac{x_{i}(t) + x_{i+1}(t) + w(t)}{2} - x_{i+k}(t)\Big)^2 
     \\ &\quad \quad \quad \quad \quad \quad \quad \quad + \Big(\frac{x_{i}(t) 
     + x_{i+1}(t) + w(t)}{2} - x_{i+1+k}(t)\Big)^2
     \\ &\quad \quad \quad \quad \quad \quad \quad \quad + 
     \sum_{j \neq i-k, i+1-k, i+k, i+1+k}  (x_j(t) - x_{j+k}(t))^2 \Bigg) \\ &=
    \frac{n-2}{n} \phi_k(t) + (w(t)^2 - \frac{\phi_1(t)}{n}) +  \frac{\phi_{k+1}(t)}{n} + \frac{\phi_{k-1}(t)}{n}.
\end{align*}
Note that in the above calculations for $\phi_1(t+1)$ and $\phi_k(t+1)$, for $k>1$
the terms which contain $w(t)$ as linear multiplicative term
disappear because we can assume that loads $x_1(t), x_2(t), ..., x_n(t)$
are normalized (this will not change our potentials) and we have:
\begin{equation}
    \sum_{i=1}^n w(t)x_i(t)=0.
\end{equation}
If we remove conditioning on $w(t)$ and express these equations for $k = 1, 2, \dots, n-1$, we get:

\[
\begin{cases}
\E[\phi_1(t+1)|X(t)] = (\frac{n-2}{n})\phi_1(t) + \frac{1}{2}(\E[W^2] - \frac{\phi_1(t)}{n}) + \frac{\phi_2(t)}{n}. \\
\E[\phi_2(t+1)|X(t)] = (\frac{n-2}{n})\phi_2(t) + (\E[W^2] - \frac{\phi_1(t)}{n}) + \frac{\phi_1(t)}{n} + \frac{\phi_3(t)}{n}. \\
\dots \\
\E[\phi_{\lfloor \frac{n}{2} \rfloor}(t+1)|X(t)] = (\frac{n-2}{n})\phi_{\lfloor \frac{n}{2} \rfloor}(t) + (\E[W^2] - \frac{\phi_1(t)}{n})  \\
\quad \quad \quad \quad \quad \quad \quad \quad \quad +
\frac{\phi_{\lfloor \frac{n}{2} \rfloor - 1}(t)}{n} + \frac{\phi_{\lfloor \frac{n}{2} \rfloor + 1}(t)}{n}. \\
\dots \\
\E[\phi_{n-2}(t+1)|X(t)] = (\frac{n-2}{n})\phi_{n-2}(t) \\
\quad \quad \quad \quad \quad \quad \quad \quad \quad+ (\E[W^2] - \frac{\phi_1(t)}{n}) + \frac{\phi_{n-3}(t)}{n} +
\frac{\phi_{n-1}(t)}{n}. \\
\E[\phi_{n-1}(t+1)|X(t)] = (\frac{n-2}{n})\phi_{n-1}(t) + \frac{1}{2}(\E[W^2] - \frac{\phi_1(t)}{n}) + \frac{\phi_{n-2}(t)}{n}. \\

\end{cases}
\]

Using the above equations we can prove the following:
\begin{lemma} \label{thm:UpperBoundOnPotentials}
For every $t \ge 0$ and $1 \le k \le n-1$, we have that
\begin{equation}
    \E[\phi_k(t)] \le (k(n-k)-1) \E[W^2] \le k(n-k)-1.
\end{equation}
\end{lemma}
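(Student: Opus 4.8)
The plan is to prove the first inequality, $\E[\phi_k(t)] \le (k(n-k)-1)\E[W^2]$; the second then follows since $\E[W^2]\le 1$ and $k(n-k)\ge 1$ for $1\le k\le n-1$. Write $c=\E[W^2]$, $f_k = k(n-k)-1$, $\Phi(t)=(\phi_1(t),\dots,\phi_{n-1}(t))$, and collect the displayed system into the affine vector recurrence $\E[\Phi(t+1)] = A\,\E[\Phi(t)] + c\,\mathbf b$, where $\mathbf b=(\tfrac12,1,\dots,1,\tfrac12)$ and $A$ is the (almost tridiagonal) operator read off from the equations, including the rank-one contribution $-\tfrac1n\phi_1$ present in every interior row. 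Since the second difference of $k\mapsto f_k$ equals $-2$ and $f_1=n-2$, a short computation gives $Af+\mathbf b=f$, so $cf$ is the fixed point of $v\mapsto Av+c\mathbf b$. We assume the process starts from the empty (equivalently, constant) configuration, so $\Phi(0)=0$.

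I would then run the induction not on the scalars $\E[\phi_k(t)]$ but on the vector \emph{deficit} $D(t):=cf-\E[\Phi(t)]$. Subtracting the fixed-point identity from the recurrence gives $D(t+1)=A\,D(t)$, a \emph{homogeneous} linear recursion. It therefore suffices to find a cone $C\subseteq\mathbb R^{n-1}$ that (i) lies in the nonnegative orthant, (ii) is invariant under $A$, and (iii) contains $D(0)=cf$; then $D(t)\in C$ for all $t$ by induction on $t$, hence $D(t)\ge 0$ coordinatewise, which is exactly the claim. The right choice is
\[
C=\bigl\{\,v\ge 0:\ v_k=v_{n-k}\ \text{for all }k,\ \ v_1\le v_2\le\cdots\le v_{\lceil n/2\rceil}\,\bigr\},
\]
the cone of nonnegative, symmetric, centrally-unimodal vectors. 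Both $f$ (sampled from a downward parabola) and $\mathbf b$ lie in $C$.

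The heart of the argument is verifying $AC\subseteq C$. Symmetry of $Av$ is immediate from the symmetry of the system. For nonnegativity, note that every $v\in C$ has $v_k\ge v_1\ge 0$ for all $k$ (a centrally-unimodal symmetric vector attains its minimum at the endpoints), so each interior coordinate $(Av)_k=\tfrac1n\bigl((n-2)v_k+v_{k-1}+v_{k+1}-v_1\bigr)\ge\tfrac{(n-3)v_1}{n}\ge 0$, while at $k=2$ the terms $-\tfrac1n v_1$ and $+\tfrac1n v_{k-1}=\tfrac1n v_1$ cancel, leaving $(Av)_2=\tfrac1n\bigl((n-2)v_2+v_3\bigr)\ge 0$, and $(Av)_1=\tfrac{2n-5}{2n}v_1+\tfrac1n v_2\ge 0$ (symmetrically at $k=n-2,n-1$). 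For unimodality of $Av$, I would compute the consecutive differences $(Av)_{k+1}-(Av)_k$ and check each is a nonnegative combination of differences $v_{j+1}-v_j$ with $j$ in the increasing half: e.g. $(Av)_{k+1}-(Av)_k=\tfrac1n\bigl((n-2)(v_{k+1}-v_k)+(v_{k+2}-v_{k+1})+(v_k-v_{k-1})\bigr)$ in the interior, $(Av)_3-(Av)_2=\tfrac1n\bigl((n-3)(v_3-v_2)+(v_4-v_1)\bigr)$, and $(Av)_2-(Av)_1\ge\tfrac1{2n}v_1\ge 0$ — all nonnegative because $v$ is non-decreasing on its first half.

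I expect the main obstacle to be pinning down the cone: a naive guess such as ``nonnegative and concave'' is \emph{not} $A$-invariant — the second difference of $Av$ near the boundary picks up a $-v_1$ term that a merely concave $v$ need not absorb — so one must identify the weaker but correctly tuned unimodality condition above. Beyond that, the only care required is bookkeeping with the irregular first/last two rows of $A$ (where the $\tfrac12$ factors and the $k=2$ cancellation live) and with the parity of $n$ (the location of the central peak in $C$); every verification reduces to an elementary inequality. I would also note that this template — rewrite as a deficit recurrence $D(t+1)=AD(t)$ and locate an $A$-invariant sub-cone of the nonnegative orthant containing the fixed point — is exactly the step that should carry over to other regular graph families, once the fixed point and the shape of the invariant cone are identified.
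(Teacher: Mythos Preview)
Your proposal is correct and is essentially the paper's own proof, recast in slightly more abstract language. The paper defines the deficit $Z(t)=Y-\Phi(t)$ with $Y_k=(k(n-k)-1)\E[W^2]$, checks that $Y$ is a fixed point of the affine recurrence, and then proves by induction on $t$ that $0\le \E[z_1(t)]\le\E[z_2(t)]\le\cdots\le\E[z_{\lfloor n/2\rfloor}(t)]$; your ``$A$-invariant cone of nonnegative, symmetric, centrally-unimodal vectors'' is exactly this ordering hypothesis, and your verification that $AC\subseteq C$ is, line for line, the paper's induction step (including the $k=2$ cancellation and the separate treatment of the boundary near $\lfloor n/2\rfloor$, which the paper also singles out).
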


\begin{proof}
Let $\Phi(t)=(\phi_1(t), \phi_2(t), ..., \phi_{n-1}(t))$
be the vector of values of our potentials at time step $t$
and let $Y=(y_1, y_2, ..., y_{n-1})$,
be the vector containing our desired upper bounds for each potential. That is: for each $1 \le i \le n-1$,
we have that $y_i=(i(n-i)-1)\E[W^2]$.
An interesting and easily checkable thing about the vector $Y$ is that 
\begin{equation}
\E[\Phi(t+1)|\Phi(t)=Y]=Y.
\end{equation}

Next, consider the vector $Z(t)=(z_1(t), z_2(t), ... z_{n-1}(t))=Y-\Phi(t)$.
Our goal is to show that for every step $t$ and coordinate $i$,
$\E[z_i(t)]\ge 0$. we have that 
\begin{align*}
\E[z_1(t&+1)|X(t)]=y_1-\E[\phi_1(t+1)|X(t)] \\ &= 
(\frac{n-2}{n})y_1 + \frac{1}{2}(\E[W^2] - \frac{y_1}{n}) + \frac{y_2}{n} 
- \Bigg((\frac{n-2}{n})\phi_1(t) + \frac{1}{2}(\E[W^2] - \frac{\phi_1(t)}{n}) + \frac{\phi_2(t)}{n} \Bigg) \\ &=
(\frac{n-2}{n})z_1(t) - \frac{z_1(t)}{2n} + \frac{z_2(t)}{n}.
\end{align*}
and for $2 \le i \le \lfloor \frac{n}{2} \rfloor$, we have that
\begin{align*}
\E[z_i(t&+1)|X(t)]= 
(\frac{n-2}{n})z_i(t) - \frac{z_1(t)}{n} + \frac{z_{i-1}(t)}{n}+
\frac{z_{i-1}(t)}{n}.
\end{align*}

Hence we get the following equations(recall that $z_i(t)=z_{n-i}(t)$):
\begin{equation} \label{case:Z_equations}
\begin{cases} 
n \times \E[z_1(t+1)|X(t)] = (n - 2 - \frac{1}{2}) z_1(t) + z_2(t). \\
n \times \E[z_2(t+1)|X(t)] = -z_1(t) +  z_1(t) + (n - 2) z_2(t) + z_3(t). \\
n \times \E[z_3(t+1)|X(t)] = -z_1(t) +  z_2(t) + (n - 2) z_3(t) + z_4(t).\\
\dots \\
n \times \E[z_{\lfloor \frac{n}{2}\rfloor}(t+1)|X(t)] = -z_1(t) + z_{\lfloor \frac{n}{2}\rfloor-1}(t)   + 
(n-2) z_{\lfloor \frac{n}{2}\rfloor}(t) + z_{\lfloor \frac{n}{2}\rfloor+1}(t).
\end{cases}
\end{equation}
Next, using induction on $t$, we show that for every $t \ge 0$

\begin{equation} \label{eq:inductionassumption}
0 \le \E[z_1(t)] \le \E[z_2(t)] \le ... \le \E[z_{\lfloor \frac{n}{2}\rfloor}(t)].
\end{equation}
The base case holds trivially since $Z(0)=Y$.
For the induction step, assume that  $0 \le \E[z_1(t)] \le \E[z_2(t)] \le ... \le \E[z_{\lfloor \frac{n}{2}\rfloor}(t)]$.
First, we have that 
\begin{equation*}
n\E[z_1(t+1)]=n\E_{X(t)}[\E[z_1(t+1)|X(t)]]=(n - 2 - \frac{1}{2}) \E[z_1(t)] + \E[z_2(t)] \ge 0.
\end{equation*}
Additionally, we have that:
\begin{align*}
n\E[z_1(t+1)] &=(n - 2 - \frac{1}{2}) \E[z_1(t)] + \E[z_2(t)] \le (n - 2) \E[z_1(t)] + \E[z_2(t)] \\ &\le  (n - 2) \E[z_2(t)] + \E[z_3(t)]=n\E[z_2(t+1)].
\end{align*}
For $2 \le i \le \lfloor \frac{n}{2}\rfloor-2$, we have that
\begin{align*}
n\E[z_i(t+1)] &=-\E[z_1(t)]+\E[z_{i-1}(t)]+(n - 2) \E[z_i(t)] + \E[z_{i+1}(t)] \\&\le -\E[z_1(t)]+\E[z_{i}(t)]+(n - 2) \E[z_{i+1}(t)] + \E[z_{i+2}(t)]  \\&= n\E[z_{i+1}(t+1)].
\end{align*}
Next, observe that by our assumption: \\ $\E[z_{\lfloor \frac{n}{2}\rfloor+1}(t)]= \E[z_{\lceil \frac{n}{2}\rceil-1}(t)]\ge \E[z_{\lfloor \frac{n}{2}\rfloor-2}(t)]$.
Finally, by using this observation we get that
\begin{align*}
n&\E[z_{\lfloor \frac{n}{2}\rfloor-1}(t+1)] =-\E[z_1(t)]+\E[z_{\lfloor \frac{n}{2}\rfloor-2}(t)]+(n - 2) \E[z_{\lfloor \frac{n}{2}\rfloor-1}(t)] + \E[z_{\lfloor \frac{n}{2}\rfloor}(t)] \\&\le -\E[z_1(t)]+\E[z_{\lfloor \frac{n}{2}\rfloor+1}(t)]+\E[z_{\lfloor \frac{n}{2}\rfloor-1}(t)] + (n - 3) \E[z_{\lfloor \frac{n}{2}\rfloor-1}(t)]+\E[z_{\lfloor \frac{n}{2}\rfloor}(t)]
\\ &\le -\E[z_1(t)]+\E[z_{\lfloor \frac{n}{2}\rfloor+1}(t)]+\E[z_{\lfloor \frac{n}{2}\rfloor-1}(t)]+(n - 2)\E[z_{\lfloor \frac{n}{2}\rfloor}(t)] \\ &= n\E[z_{\lfloor \frac{n}{2}\rfloor}(t+1)].
\end{align*}
This completes the proof of the theorem.
\end{proof}

\section{Upper Bound on the Gap for $n=2^m$}
In this section we upper bound a gap in expectation for $n=2^m$ case. The proof for the general case
is quite technical but not necessarily more interesting, and is provided in the section \ref{sec:generalUB} in the Appendix.

We begin with some definitions. 
For a set $A \subseteq \{1, 2, \dots, n\}$, let
\[
    Gap_A(t) = \max_{i \in A} (x_i(t)) - \min_{i \in A}(x_i(t)).
\]
Also, let $A_{k}^i$ be $\{i, i+2^k, i + 2\times 2^k, i + 3\times 2^k, \dots\}$
(Notice that $i=i+2^{m-k}2^k$).  
Our proof works as follows: for each $1 \le i \le n$ and
$0 < k \le m$, we look at the vertices given by the sets $A_k^i$ and $A_k^{i+2^{k-1}}$ 
and try to characterise the gap after we merge those sets (Note that this will give us 
the gap for the set $A_{k-1}^i=A_k^i \cup A_k^{i+2^{k-1}}$). Using this result, we are able to show that 
$\sum_{i=1}^n Gap_{A^i_{k-1}}(t)$ is upper bounded by $\sum_{i=1}^n Gap_{A_{k}^i}(t)$ plus $n$ times
maximum load difference between vertices at hop distance $2^{k-1}$. 
Next, we use $2^{k-1}$ hop distance potential  $\phi_{2^{k-1}}(t)$ to upper bound
maximum load between the vertices at hop distance $2^{k-1}$.
Using induction on $k$, we are able to upper bound $\sum_{i=1}^n Gap_{A^i_{0}}(t)$ in terms of 
$\sum_{i=1}^n Gap_{A^i_{m}}(t)$ and $\sum_{k=1}^m \phi_{2^{k-1}}(t)$.
Notice that by our definitions, for each $i$, $Gap_{A^i_{m}}(t)=0$ ($A_i^m$ contains only vertex $i$) and $Gap_{A^i_{0}}(t)=Gap(t)$ ($A^{i}_0$ contains all vertices).
Hence, what is left is to use the upper bounds for the hop distance potentials, which
we derived in the previous section.

We start by proving the following useful lemma.

\begin{lemma} \label{lem:UnionOfSets}
For any $1 \le i \le n$ and $k > 0$, we have that
\begin{equation}
2Gap_{A_{k-1}^i}(t) \le 2 \max_{j \in A_{k-1}^i}|x_j(t) - x_{j+2^{k-1}}(t)| + Gap_{A_{k}^{i+2^{k-1}}}(t)+
Gap_{A_{k}^i}(t).
\end{equation}
\end{lemma}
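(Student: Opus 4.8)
The plan is to relate the gap over the merged set $A_{k-1}^i = A_k^i \cup A_k^{i+2^{k-1}}$ to the gaps over the two halves plus the maximum $2^{k-1}$-hop load difference. Let me write $h = 2^{k-1}$ for brevity. Observe that $A_{k-1}^i$ is exactly the disjoint union of $A_k^i$ and $A_k^{i+h}$, and moreover the two halves \emph{interleave} along the cycle: every element $j \in A_k^i$ has its cyclic successor-in-$A_{k-1}^i$ equal to $j+h \in A_k^{i+h}$, and vice versa. This interleaving is the structural fact I want to exploit.

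The key step is to pick the extremal vertices realizing $Gap_{A_{k-1}^i}(t)$. Let $p \in A_{k-1}^i$ attain the maximum load and $q \in A_{k-1}^i$ attain the minimum load over the merged set, so $Gap_{A_{k-1}^i}(t) = x_p(t) - x_q(t)$. There are a few cases depending on which halves $p$ and $q$ lie in. If $p$ and $q$ lie in the same half, say both in $A_k^i$, then $x_p(t) - x_q(t) \le Gap_{A_k^i}(t)$, and I still need to produce a bound of the claimed form — here I would use that the other half's gap and the hop term are nonnegative, but more carefully: I can always relocate one endpoint by a single hop of length $h$ into the other half at the cost of $\max_{j}|x_j(t)-x_{j+h}(t)|$, reducing to a ``mixed'' configuration. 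So the main case to handle is $p \in A_k^i$, $q \in A_k^{i+h}$ (or symmetrically). In that case, let $p' = p+h \in A_k^{i+h}$ and $q' = q - h \in A_k^i$ (indices mod $n$); then
\begin{align*}
2Gap_{A_{k-1}^i}(t) &= 2(x_p(t) - x_q(t)) \\
&= \big(x_p(t) - x_{q'}(t)\big) + \big(x_{p'}(t) - x_q(t)\big) + \big(x_{q'}(t) - x_q(t)\big) + \big(x_p(t) - x_{p'}(t)\big).
\end{align*}
Now $x_p(t) - x_{q'}(t) \le Gap_{A_k^i}(t)$ since both $p, q'\in A_k^i$; similarly $x_{p'}(t) - x_q(t) \le Gap_{A_k^{i+h}}(t)$; and each of $|x_{q'}(t)-x_q(t)|$, $|x_p(t)-x_{p'}(t)|$ is at most $\max_{j\in A_{k-1}^i}|x_j(t) - x_{j+h}(t)|$ because $q'$ and $q=q'+h$ are $h$-hop neighbors, as are $p$ and $p'=p+h$. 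Summing gives exactly the claimed inequality.

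The main obstacle is bookkeeping the boundary/same-half cases cleanly — ensuring that when $p$ and $q$ happen to sit in the same half (or coincide with the ``wrap-around'' point $i = i + 2^{m-k}h$), the decomposition above still yields a valid upper bound of the stated shape rather than something weaker. I expect this is handled by noting that all four right-hand-side terms are individually nonnegative (or can be made so by the identity $Gap_A = \max - \min \ge 0$), so any configuration's true gap is dominated by the worst mixed configuration; alternatively, one observes directly that a maximum and a minimum over the interleaved union can always be ``shifted'' to adjacent halves losing at most one hop term. Once the mixed case is established as above, the same-half cases follow a fortiori.
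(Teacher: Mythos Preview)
Your mixed case ($p\in A_k^i$, $q\in A_k^{i+h}$) is correct and is essentially the paper's Case~2, written as a single algebraic identity rather than two separate triangle inequalities; the bounds $x_p-x_{q'}\le Gap_{A_k^i}$ and $x_{p'}-x_q\le Gap_{A_k^{i+h}}$ hold because $Gap_A$ majorises every signed difference within $A$, so no positivity check is needed.

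The same-half case, however, is where your write-up has a genuine gap. Neither of your proposed reductions works as stated. The ``nonnegativity / a fortiori'' remark is not an argument: when $p,q\in A_k^i$ you must show $Gap_{A_k^i}(t)\le 2\max_j|x_j-x_{j+h}|+Gap_{A_k^{i+h}}(t)$, and there is no reason this is dominated by the mixed case. The ``shift one endpoint'' idea also fails: if you move $q$ to $q+h$ and apply your mixed decomposition to $(p,q+h)$, you bound $2(x_p-x_{q+h})$ by the right-hand side, but $2(x_p-x_q)=2(x_p-x_{q+h})+2(x_{q+h}-x_q)$ and the extra term $2(x_{q+h}-x_q)$ is \emph{nonnegative} (since $q$ is the global minimum), so you end up with an additional $2\max_j|x_j-x_{j+h}|$ that you cannot absorb.

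The fix (and this is exactly what the paper does) is to shift \emph{both} endpoints across in the same-half case: with $u=p,\,v=q\in A_k^i$, the triangle inequality gives
\[
x_u-x_v \;\le\; |x_u-x_{u+h}| + |x_{u+h}-x_{v+h}| + |x_{v+h}-x_v| \;\le\; 2\max_{j\in A_{k-1}^i}|x_j-x_{j+h}| + Gap_{A_k^{i+h}}(t),
\]
and adding the trivial $x_u-x_v = Gap_{A_k^i}(t)$ yields the claim. So the overall approach matches the paper's; you just need this explicit double-shift for Case~1 (and its mirror Case~3) in place of the hand-wave.
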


\begin{proof}
Fix vertex $i$.
Note that $A_{k-1}^i=A_{k}^i \cup A_{k}^{i+2^{k-1}}$.
Let $u=\argmax_{j \in A_{k-1}^i} x_j(t)$ and let $v =\argmin_{j \in A_{k-1}^i} x_j(t)$.
We consider several cases on the membership of nodes $u$ and $v$, and bound the gap in each one:

\textbf{Case 1}. $u \in A_{k}^{i}$ and $v \in A_{k}^{i}$.
Then $Gap_{A_{k-1}^i}(t)=Gap_{A_{k}^i}(t)$  
and we have that

\begin{align*}
    Gap_{A_{k}^i}(t)  &= |x_u(t) - x_v(t)| \\ &\leq |x_{u+2^{k-1}}(t) - x_{u}(t)| + |x_{v+2^{k-1}}(t) - x_{v}(t)| + |x_{u+2^{k-1}}(t) - x_{v+2^{k-1}}(t)| \\
    & \leq |x_{u+2^{k-1}}(t) - x_{u}(t)| + |x_{v+2^{k-1}}(t) - x_{v}(t)| + Gap_{A_{k}^{i+2^{k-1}}}(t) \\ 
    & \leq  2 \max_{j \in A_{k-1}^i}|x_j(t) - x_{j+2^{k-1}}(t)| + Gap_{A_{k}^{i+2^{k-1}}}(t).
\end{align*}
Where we used the fact that both $u+2^{k-1}$ and $v+2^{k-1}$ belong to $A_{k}^{i+2^{k-1}}$.
This gives us that
\begin{equation}
2Gap_{A_{k-1}^i}(t) \le 2 \max_{j \in A_{k-1}^i}|x_j(t) - x_{j+2^{k-1}}(t)| + Gap_{A_{k}^{i+2^{k-1}}}(t)+
Gap_{A_{k}^i}(t).
\end{equation}

\textbf{Case 2}. $u \in A_{k}^{i}$ and $v \in A_{k}^{i+2^{k-1}}$.
Then we have that:

\begin{align*}
    Gap_{A_{k-1}^i}(t) &= |x_u(t) - x_v(t)| \leq |x_u(t) - x_{v + 2^{k-1}}(t)| + |x_{v+2^{k-1}}(t) - x_v(t)| \\
    & \leq Gap_{A_{k}^{i}}(t) +  \max_{j \in A_{k-1}^i}(|x_j(t) - x_{j+2^{k-1}}(t)|)
\end{align*}
and
\begin{align*}
    Gap_{A_{k-1}^i}(t) &= |x_u(t) - x_v(t)| \leq |x_u(t) - x_{u + 2^{k-1}}(t)| + |x_{u+2^{k-1}}(t) - x_v(t)| \\
    & \leq Gap_{A_{k}^{i+2^{k-1}}}(t) +  \max_{j \in A_{k-1}^i}(|x_j(t) - x_{j+2^{k-1}}(t)|)
\end{align*}
Where we used $v+2^{k-1} \in A_i^k$ and $u+2^{k-1} \in A_{k}^{i+2^{k-1}}$.
Hence, we again get that
\begin{equation}
2Gap_{A_{k-1}^i}(t) \le 2 \max_{j \in A_{k-1}^i}|x_j(t) - x_{j+2^{k-1}}(t)| + Gap_{A_{k}^{i+2^{k-1}}}(t)+
Gap_{A_{k}^i}(t).
\end{equation}

\textbf{Case 3}. $u \in A_{k}^{i+2^{k-1}}$ and $v \in A_{k}^{i+2^{k-1}}$, is similar to Case 1.

\textbf{Case 4}. $v \in A_{k}^{i}$ and $u \in A_{k}^{i+2^{k-1}}$, is similar to Case 2.
\end{proof}

Next, we upper bound the quantity $\sum_{i=1}^n \max_{j \in A_{k}^i}|x_j(t) - x_{j+2^{k}}(t)|.$

\begin{lemma} \label{lem:SumOfEdges}
\begin{equation}
\sum_{i=1}^n \max_{j \in A_{k}^i}|x_j(t) - x_{j+2^{k}}(t)| \le \frac{n}{\sqrt{2^k}} \sqrt{\phi_{2^k}(t)}.
\end{equation}
\end{lemma}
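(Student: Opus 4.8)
The plan is to understand the structure of the sets $A_k^i$ and then apply Cauchy--Schwarz twice: once to bound each $\max$ by a sum over the relevant $2^k$-hop edges, and once to convert the resulting $\ell_1$-type sum into the $\ell_2$ quantity $\sqrt{\phi_{2^k}(t)}$. First I would observe that $A_k^i$ consists of the $n/2^k$ vertices $\{i, i+2^k, i+2\cdot 2^k, \dots\}$, which are exactly the vertices at positions congruent to $i \bmod 2^k$, and that these vertices form a cycle of length $n/2^k$ under the ``$+2^k$'' step. Hence for a fixed $i$, the quantity $\max_{j \in A_k^i}|x_j(t) - x_{j+2^k}(t)|$ is the maximum, over the $n/2^k$ edges of this sub-cycle, of $|x_j(t)-x_{j+2^k}(t)|$, and each such term is a summand appearing in $\phi_{2^k}(t) = \sum_{\ell=1}^n (x_\ell(t)-x_{\ell+2^k}(t))^2$.

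Next I would bound the single maximum by a square root of a sum: for any collection of reals, $\max_a |c_a| \le \sqrt{\sum_a c_a^2}$. Applying this to the sub-cycle indexed by $i$ gives
\begin{equation*}
\max_{j \in A_k^i}|x_j(t) - x_{j+2^k}(t)| \le \sqrt{\sum_{j \in A_k^i} (x_j(t)-x_{j+2^k}(t))^2}.
\end{equation*}
Now I would sum over $i=1,\dots,n$. The key combinatorial point is that as $i$ ranges over all $n$ vertices, each residue class mod $2^k$ is hit exactly $2^k$ times (the sets $A_k^i$ and $A_k^{i'}$ coincide whenever $i \equiv i' \bmod 2^k$), so $\sum_{i=1}^n \sqrt{\sum_{j \in A_k^i}(\cdots)^2} = 2^k \sum_{r=1}^{2^k} \sqrt{S_r}$ where $S_r = \sum_{j \equiv r} (x_j(t)-x_{j+2^k}(t))^2$ and $\sum_{r=1}^{2^k} S_r = \phi_{2^k}(t)$. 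Applying Cauchy--Schwarz to $\sum_{r=1}^{2^k}\sqrt{S_r} \le \sqrt{2^k}\sqrt{\sum_r S_r} = \sqrt{2^k}\sqrt{\phi_{2^k}(t)}$ yields the total bound $2^k \cdot \sqrt{2^k}\sqrt{\phi_{2^k}(t)} / \sqrt{2^k}$... let me recompute the bookkeeping: $\sum_{i=1}^n (\cdots) = 2^k \sum_{r=1}^{2^k}\sqrt{S_r} \le 2^k \sqrt{2^k} \sqrt{\phi_{2^k}(t)}$, and since $n/\sqrt{2^k} = (n/2^k)\sqrt{2^k}$, I need $2^k\sqrt{2^k} = n/\sqrt{2^k}$, i.e. $2^{3k/2} = n \cdot 2^{-k/2}$, which forces me to be more careful — the correct grouping is to index $A_k^i$ directly and note there are only $2^k$ distinct sets each counted $2^k$ times among the $n = 2^m$ values of $i$ only when $n/2^k \cdot 2^k$... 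The cleanest route, which I would actually write, is: there are $2^k$ distinct sets $A_k^i$, the sum $\sum_{i=1}^n$ counts each distinct set with multiplicity $n/2^k$ — no; each distinct set is $A_k^i$ for exactly $2^k$ values of $i$ in $\{1,\dots,n\}$... Rather than risk an arithmetic slip here, I will carefully track the multiplicity $n / (n/2^k) = 2^k$ of index repetition versus the number $2^k$ of classes, apply $\max \le \ell_2$-norm per set, then Cauchy--Schwarz across the $2^k$ classes with the constraint that the sums total $\phi_{2^k}(t)$.

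\textbf{Main obstacle.} The inequality-chasing is routine; the one place to be careful is the double-counting bookkeeping — precisely how many of the $n$ indices $i$ give the same set $A_k^i$, and how the per-set $\ell_2$ norms partition $\phi_{2^k}(t)$ — since getting the multiplicities right is exactly what makes the constant come out to $n/\sqrt{2^k}$ rather than something larger. I would double-check this with the small case $k=m$ (each set is a singleton, every term $|x_j - x_{j+n}| = 0$, both sides zero) and $k$ such that $2^k \mid n$ generally, to make sure the grouping is consistent.
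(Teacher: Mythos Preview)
Your approach is exactly the paper's: reduce the sum over $i=1,\dots,n$ to a sum over the $2^k$ distinct residue classes, bound each $\max$ by the $\ell_2$-norm over its class, and apply Cauchy--Schwarz across classes using that the classes partition $\{1,\dots,n\}$ so their squared contributions sum to $\phi_{2^k}(t)$.

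The one concrete slip is the multiplicity: since $A_k^i = A_k^{i'}$ iff $i\equiv i'\pmod{2^k}$, each of the $2^k$ distinct sets is represented by exactly $n/2^k$ indices $i\in\{1,\dots,n\}$, not $2^k$ as you wrote. With the correct multiplicity the arithmetic closes immediately:
\[
\sum_{i=1}^n \sqrt{S_{i\bmod 2^k}}
= \frac{n}{2^k}\sum_{r=1}^{2^k}\sqrt{S_r}
\le \frac{n}{2^k}\,\sqrt{2^k}\,\sqrt{\sum_{r=1}^{2^k}S_r}
= \frac{n}{\sqrt{2^k}}\sqrt{\phi_{2^k}(t)}.
\]
(The paper does the two steps in the opposite order---Cauchy--Schwarz on the $2^k$ maxima first, then $\max^2\le\sum$---but this is cosmetic.) Your sanity check $k=m$ is not informative since both sides vanish; a better check is $k=0$: one class, multiplicity $n$, and both sides equal $n\sqrt{\phi_1(t)}$.
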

\begin{proof}
Notice that for any $i$ and $i' \in A_k^i$, we have that $A_k^i=A_k^{i'}$,
hence $\max_{j \in A_{k}^i}|x_j(t) - x_{j+2^{k}}(t)| = \max_{j \in A_{k}^i}|x_j(t) - x_{j+2^{k}}(t)|$
and this means that 
\begin{align*}
\sum_{i=1}^n \max_{j \in A_{k}^i}|x_j(t) &- x_{j+2^{k}}(t)| = \frac{n}{2^k} \sum_{i=1}^{2^{k}}
\max_{j \in A_{k}^{i}}|x_j(t) - x_{j+2^{k}}(t)| \\ &\le 
\frac{n}{2^k}\sqrt{2^k} \sqrt{\sum_{i=1}^{2^{k}}
\max_{j \in A_{k}^{i}}|x_j(t) - x_{j+2^{k}}(t)|^2} \\&\le
\frac{n}{2^k}\sqrt{2^k} \sqrt{\sum_{j=1}^n |x_j(t) - x_{j+2^{k}}(t)|^2}=\frac{n}{\sqrt{2^k}} \sqrt{\phi_{2^k}(t)}
\end{align*}
Where we used a fact that sets $A_k^1, A_k^2, ..., A_k^{2^k}$ are disjoint.
\end{proof}

Finally, using the two Lemmas above and Theorem \ref{thm:UpperBoundOnPotentials} we can upper bound the expected gap at step $t$:
\begin{theorem} \label{thm:GapUpperBound}
For every $t \ge 0$, we have that
\begin{equation*}
\E[Gap(t)] = O(\sqrt{n} \log(n)).
\end{equation*}
\end{theorem}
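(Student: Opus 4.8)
The plan is to unwind the recursion provided by Lemma~\ref{lem:UnionOfSets} down the chain $k = m, m-1, \dots, 1$ and then take expectations, using Lemma~\ref{lem:SumOfEdges} and Lemma~\ref{thm:UpperBoundOnPotentials} to control the error terms. Concretely, first I would sum the inequality of Lemma~\ref{lem:UnionOfSets} over all $i \in \{1, \dots, n\}$. Since the sets $\{A_k^i\}_i$ and $\{A_k^{i+2^{k-1}}\}_i$ each range over exactly the same collection of sets as $i$ varies (just reindexed), the two terms $\sum_i Gap_{A_k^i}(t)$ and $\sum_i Gap_{A_k^{i+2^{k-1}}}(t)$ are equal, so summing and dividing by $2$ yields
\begin{equation*}
\sum_{i=1}^n Gap_{A_{k-1}^i}(t) \le \sum_{i=1}^n Gap_{A_k^i}(t) + \sum_{i=1}^n \max_{j \in A_{k-1}^i}|x_j(t) - x_{j+2^{k-1}}(t)|.
\end{equation*}
Applying Lemma~\ref{lem:SumOfEdges} with the hop parameter $2^{k-1}$ bounds the last sum by $\frac{n}{\sqrt{2^{k-1}}}\sqrt{\phi_{2^{k-1}}(t)}$, which recovers inequality~(2) from the introduction.

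Next I would iterate this from $k=m$ down to $k=1$, telescoping the $Gap_{A_k^i}$ terms, to obtain
\begin{equation*}
\sum_{i=1}^n Gap_{A_0^i}(t) \le \sum_{i=1}^n Gap_{A_m^i}(t) + \sum_{k=1}^{m} \frac{n}{\sqrt{2^{k-1}}}\sqrt{\phi_{2^{k-1}}(t)}.
\end{equation*}
By definition $A_m^i = \{i\}$, so $Gap_{A_m^i}(t) = 0$, while $A_0^i = \{1, \dots, n\}$, so $Gap_{A_0^i}(t) = Gap(t)$ for every $i$; the left side is thus $n\, Gap(t)$. Dividing by $n$ gives $Gap(t) \le \sum_{k=1}^m \frac{1}{\sqrt{2^{k-1}}}\sqrt{\phi_{2^{k-1}}(t)}$. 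Now I take expectations and apply Jensen's inequality (concavity of $\sqrt{\cdot}$) to each term: $\E[\sqrt{\phi_{2^{k-1}}(t)}] \le \sqrt{\E[\phi_{2^{k-1}}(t)]}$. By Lemma~\ref{thm:UpperBoundOnPotentials}, $\E[\phi_{2^{k-1}}(t)] \le 2^{k-1}(n - 2^{k-1}) - 1 < 2^{k-1} n$, so each summand is at most $\frac{1}{\sqrt{2^{k-1}}}\sqrt{2^{k-1} n} = \sqrt{n}$. Summing over the $m = \log_2 n$ values of $k$ gives $\E[Gap(t)] \le \sqrt{n}\,\log_2 n = O(\sqrt{n}\log n)$.

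I expect the main subtlety — though it is conceptual rather than computational — to be verifying the reindexing symmetry that makes $\sum_i Gap_{A_k^i}(t) = \sum_i Gap_{A_k^{i+2^{k-1}}}(t)$ when summing Lemma~\ref{lem:UnionOfSets}, and more generally being careful that the telescoping is legitimate: as $i$ ranges over all $n$ residues, the multiset $\{A_k^i : i\}$ contains each distinct coset of the subgroup generated by $2^k$ exactly $2^k$ times, and the same holds with the shift by $2^{k-1}$, so the bookkeeping goes through. The only other point requiring a line of justification is the application of Jensen, which is immediate since $\phi_{2^{k-1}}(t) \ge 0$ and square root is concave. Everything else is arithmetic, and no term introduces more than a $\sqrt{n}$ contribution, with the logarithmic factor arising solely from the number of levels $m$ in the recursion.
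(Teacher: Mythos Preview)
Your proposal is correct and follows essentially the same approach as the paper: sum Lemma~\ref{lem:UnionOfSets} over $i$, use the reindexing symmetry to collapse the two gap sums, apply Lemma~\ref{lem:SumOfEdges}, telescope, then take expectations with Jensen and Lemma~\ref{thm:UpperBoundOnPotentials}. The only cosmetic difference is that the paper telescopes to level $k=m-1$ and bounds the remaining term $\sum_i Gap_{A_{m-1}^i}(t)=\sum_i |x_i(t)-x_{i+n/2}(t)|$ directly by Cauchy--Schwarz, whereas you telescope one more step to $k=m$ (where $A_m^i=\{i\}$ and the gaps vanish); both yield the same $O(\sqrt{n}\log n)$ bound.
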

 
 \begin{proof}

From Lemma \ref{lem:UnionOfSets} we have that 
\begin{align*}
\sum_{i=1}^n 2Gap_{A_{k-1}^i(t)} &\le \sum_{i=1}^n Gap_{A_{k}^i}(t)+\sum_{i=1}^n Gap_{A_{k}^{i+2^{k-1}}}(t)  \\& \quad \quad \quad \quad \quad \quad \quad +
\sum_{i=1}^n 2 \max_{j \in A_{k-1}^i}|x_j(t) - x_{j+2^{k-1}}(t)| \\&=
2\sum_{i=1}^n Gap_{A_{k}^i}(t)+2\sum_{i=1}^n \max_{j \in A_{k-1}^i}|x_j(t) - x_{j+2^{k-1}}(t)|.
\end{align*}
After dividing the above inequality by 2 and applying Lemma \ref{lem:SumOfEdges}
we get that:
\begin{align*}
\sum_{i=1}^n Gap_{A_{k-1}^i}(t) \le 
\sum_{i=1}^n Gap_{A_{k}^i}(t)+\frac{n}{\sqrt{2^{k-1}}}\sqrt{\phi_{2^{k-1}}(t)}.
\end{align*}
Notice that $\sum_{i=1}^n Gap_0^i(t)=n Gap(t)$
and we also have that 
\begin{align*}
\sum_{i=1}^n Gap_{\frac{n}{2}}^i(t)&= \sum_{i=1}^n |x_i(t)-x_{i+\frac{n}{2}}(t)|  \le
\sqrt{n} \sqrt{\sum_{i=1}^n |x_i(t)-x_{i+\frac{n}{2}}(t)|^2}=\sqrt{n}\sqrt{\phi_{\frac{n}{2}}(t)}
\end{align*}
Hence, we get that 
\begin{align*}
n Gap(t)=\sum_{i=1}^n Gap_0^i(t) &\le \sum_{i=1}^n Gap_{\frac{n}{2}}^i(t)+\sum_{k=1}^{m-1}\frac{n}{\sqrt{2^{k-1}}}\sqrt{\phi_{2^{k-1}}(t)} \\& \le
\sqrt{n}\sqrt{\phi_{\frac{n}{2}}(t)}+\sum_{k=1}^{m-1}\frac{n}{\sqrt{2^{k-1}}}\sqrt{\phi_{2^{k-1}}(t)}.
\end{align*}
Next, we apply Jensen and Theorem \ref{thm:UpperBoundOnPotentials}:
\begin{align*}
n \E[Gap(t)]  &\le
\sqrt{n}\E\sqrt{\phi_{\frac{n}{2}}(t)}+\sum_{k=1}^{m-1}\frac{n}{\sqrt{2^{k-1}}}\E\sqrt{\phi_{2^{k-1}}(t)}
\\ &\le \sqrt{n}\sqrt{\E[\phi_{\frac{n}{2}}(t)]}+\sum_{k=1}^{m-1}\frac{n}{\sqrt{2^{k-1}}}\sqrt{\E[\phi_{2^{k-1}}(t)]} \\ &\le \sqrt{n}\sqrt{\Big(\frac{n}{2}\Big)^2}+\sum_{k=1}^{m-1}\frac{n}{\sqrt{2^{k-1}}}
\sqrt{2^{k-1}(n-2^{k-1})} \\ &\le m n\sqrt{n}=n(\log{n})\sqrt{n}.
\end{align*}
This gives us the proof of the theorem.
\end{proof}

\section{Gap Lower Bound}

Next we prove the following theorem, which provides strong evidence that our bound on the gap is tight within a logarithmic factor. 

\begin{theorem}
The following limit holds: 
\begin{equation*}
 \lim_{t\to\infty} \E[Gap(t)^2]=\Omega(n\E[W^2])).   
\end{equation*}

\end{theorem}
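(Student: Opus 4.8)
The plan is to reduce the statement to a convergence claim about the hop-potentials, and then to show that each $\E[\phi_k(t)]$ actually saturates, in the limit $t\to\infty$, the fixed-point value $y_k=(k(n-k)-1)\E[W^2]$ that already appeared in the proof of Lemma~\ref{thm:UpperBoundOnPotentials}. The reduction is immediate: every summand of $\phi_{\lfloor n/2\rfloor}(t)=\sum_{i=1}^n\bigl(x_i(t)-x_{i+\lfloor n/2\rfloor}(t)\bigr)^2$ is at most $Gap(t)^2$, so $Gap(t)^2\ge \tfrac1n\,\phi_{\lfloor n/2\rfloor}(t)$ pointwise, and since $y_{\lfloor n/2\rfloor}=\Theta(n^2\E[W^2])$ it suffices to prove $\E[\phi_{\lfloor n/2\rfloor}(t)]\to y_{\lfloor n/2\rfloor}$ (indeed, that it is eventually at least $\tfrac12 y_{\lfloor n/2\rfloor}$). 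This is the sense in which the upper-bound argument is ``turned inside out'': the same potentials, now bounded from below, bound the gap from below.

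For the convergence I introduce the global variance $\mathrm{Var}(t)=\sum_{i=1}^n\bigl(x_i(t)-\bar x(t)\bigr)^2$ with $\bar x(t)=\tfrac1n\sum_i x_i(t)$. Summing $(x_i-x_j)^2$ over all ordered pairs of indices gives the identity $\sum_{k=1}^{n-1}\phi_k(t)=2n\,\mathrm{Var}(t)$, so by Lemma~\ref{thm:UpperBoundOnPotentials} the sequence $\E[\mathrm{Var}(t)]\le \tfrac1{2n}\sum_{k=1}^{n-1}y_k=\Theta(n^2\E[W^2])$ is bounded. A one-step computation (track $\sum_i x_i(t)^2$ and $S(t)=\sum_i x_i(t)$ separately, using that exactly one random edge is averaged, and noting the terms linear in $w(t)$ cancel) gives
\begin{equation*}
\E[\mathrm{Var}(t+1)\mid X(t)]-\mathrm{Var}(t)=\frac{(n-2)\E[W^2]-\phi_1(t)}{2n}=\frac{y_1-\phi_1(t)}{2n},
\end{equation*}
which is nonnegative in expectation by Lemma~\ref{thm:UpperBoundOnPotentials}. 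Hence $\E[\mathrm{Var}(t)]$ is nondecreasing and bounded, so it converges, its increments tend to $0$, and therefore $\E[\phi_1(t)]\to y_1$, i.e. $\E[z_1(t)]\to 0$ for $z_i(t)=y_i-\phi_i(t)$. I then bootstrap up the hop index using the linear recursions for $\E[z_i(t+1)]$ recorded in the proof of Lemma~\ref{thm:UpperBoundOnPotentials}: from $n\,\E[z_1(t+1)]=(n-\tfrac52)\E[z_1(t)]+\E[z_2(t)]$ one gets $\E[z_2(t)]\to 0$, and from $n\,\E[z_i(t+1)]=-\E[z_1(t)]+\E[z_{i-1}(t)]+(n-2)\E[z_i(t)]+\E[z_{i+1}(t)]$ one solves for $\E[z_{i+1}(t)]$ as a fixed linear combination of sequences already known to tend to $0$, so inductively $\E[z_k(t)]\to 0$ for $k=1,\dots,\lfloor n/2\rfloor$; the remaining indices follow from $\phi_k=\phi_{n-k}$. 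Thus $\E[\phi_k(t)]\to y_k$ for all $k$; in particular $\E[\phi_{\lfloor n/2\rfloor}(t)]\to y_{\lfloor n/2\rfloor}=\Theta(n^2\E[W^2])$, and with $Gap(t)^2\ge\tfrac1n\phi_{\lfloor n/2\rfloor}(t)$ this yields $\E[Gap(t)^2]\ge c\,n\,\E[W^2]$ for all sufficiently large $t$, i.e. the claimed $\Omega(n\E[W^2])$.

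The main obstacle is precisely the convergence step: Lemma~\ref{thm:UpperBoundOnPotentials} only gives $\E[\phi_k(t)]\le y_k$, and by itself it does not rule out the potentials stabilizing strictly below their fixed point; moreover the transition operator of the $\phi$-system is not a contraction in any obvious norm (the $-\phi_1/n$ terms push its $\infty$-norm above $1$), so a direct spectral-gap estimate is awkward. The variance is the right scalar quantity here: its drift has an unambiguous sign, so monotone convergence applies to it directly and pins down $\phi_1$ exactly, after which the linear recurrence among the $z_i$ propagates the conclusion to every hop distance with no further estimates. Minor care is needed for the boundary term at $k=\lfloor n/2\rfloor$ and for the parity of $n$, but the recursions in the proof of Lemma~\ref{thm:UpperBoundOnPotentials} already handle those cases, and the whole argument keeps $\E[W^2]$ as a free parameter, so the normalization used elsewhere is not needed.
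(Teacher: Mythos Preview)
Your proposal is correct and reaches the same endpoint as the paper---show $\E[\phi_{\lfloor n/2\rfloor}(t)]\to y_{\lfloor n/2\rfloor}$ and combine with $Gap(t)^2\ge \tfrac1n\phi_{\lfloor n/2\rfloor}(t)$---but the route to the convergence is genuinely different. The paper works entirely inside the $z$-recursion: from $\E[z_{\lfloor n/2\rfloor}(t+1)]\le \E[z_{\lfloor n/2\rfloor}(t)]-\E[z_1(t)]/n$ and the lower bound $\E[z_i(t+1)]\ge \E[z_{i+1}(t)]/n$ it chains $\lfloor n/2\rfloor$ steps to obtain a block contraction $\E[z_{\lfloor n/2\rfloor}(t+\lfloor n/2\rfloor)]\le\bigl(1-(1/n)^{\lfloor n/2\rfloor-1}\bigr)\E[z_{\lfloor n/2\rfloor}(t)]$, which directly forces $\E[z_{\lfloor n/2\rfloor}(t)]\to 0$. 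You instead introduce the scalar Lyapunov function $\mathrm{Var}(t)$, whose one-step drift is exactly $\E[z_1(t)]/(2n)\ge 0$; boundedness (via $\sum_k\phi_k=2n\,\mathrm{Var}$ and Lemma~\ref{thm:UpperBoundOnPotentials}) plus monotonicity pins down $\E[z_1(t)]\to 0$, after which the linear recurrences let you solve for $\E[z_{i+1}(t)]$ in terms of sequences already known to vanish, propagating the conclusion to every $k$.

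What each buys: the paper's argument is self-contained within the $z$-system and, as a side effect, gives an explicit (if astronomically slow) convergence rate. Your argument is cleaner and more conceptual---the variance is the natural global potential, and the bootstrap is pure algebra with no new inequalities---and it actually proves the stronger statement $\E[\phi_k(t)]\to y_k$ for \emph{every} $k$, not just $k=\lfloor n/2\rfloor$. Your drift identity is also robust to $\E[W]\neq 0$, since the linear-in-$w$ term averages against $\E_j[x_j+x_{j+1}-2\bar x]=0$ regardless, so the normalization caveat in your last sentence is indeed unnecessary.
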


\begin{proof}
In this case  we want to prove that not only does vector $Z(t)$ have positive coordinates in expectation, but also $\E[z_{\lfloor \frac{n}{2}\rfloor}]$  converges to 0
. This will give us that $\phi_{\lfloor \frac{n}{2}\rfloor}$
approaches it's upper bound $(\lfloor \frac{n}{2}\rfloor 
\lceil \frac{n}{2}\rceil-1)\E[W^2]$ in expectation.
Then, we can show that there exist two nodes(At distance $\lfloor \frac{n}{2}\rfloor$) such that the expected square of difference between their loads is $\Omega(n\E[w^2])$.

Recall from Equations \ref{case:Z_equations} that
\begin{align*}
    n\E[z_{\lfloor \frac{n}{2} \rfloor (t+1)}] &= 
    -\E[z_1(t)]+\E[z_{\lfloor \frac{n}{2}\rfloor+1}(t)] +\E[z_{\lfloor \frac{n}{2}\rfloor-1}(t)]+(n - 2)\E[z_{\lfloor \frac{n}{2}\rfloor}(t)].
\end{align*}
We also know that Inequalities \ref{eq:inductionassumption} hold
for every $t$, hence we get that
\begin{equation*}
 \E[z_{\lfloor \frac{n}{2} \rfloor (t+1)}] \le
 \E[z_{\lfloor \frac{n}{2} \rfloor (t)}]-\frac{\E[z_1(t)]}{n}.
\end{equation*}
The above inequality in combination with Inequalities \ref{eq:inductionassumption} means that
\begin{align} \label{eq:UpperBound}
\E[z_{\lfloor \frac{n}{2} \rfloor}(t+\lfloor \frac{n}{2} \rfloor+1)] &\le \E[z_{\lfloor \frac{n}{2} \rfloor}(t+1)]-\sum_{i=t}^{t+\lfloor \frac{n}{2} \rfloor} \frac{\E[z_1(i)]}{n} \le \E[z_{\lfloor \frac{n}{2} \rfloor}(t+1)]- \frac{\E[z_1(t+\lfloor \frac{n}{2} \rfloor)]}{n}
\end{align}

Again by using Equations \ref{case:Z_equations} and Inequalities \ref{eq:inductionassumption}, we can show that for every $1 \le i \le \lfloor \frac{n}{2} \rfloor -1$:

\begin{equation*}
 \E[z_i(t+1)] \ge \frac{\E[z_{i+1}(t)]}{n}.
\end{equation*}
This gives us that:
\begin{align*} 
    \E[z_1(t+\lfloor \frac{n}{2} \rfloor)] &\geq 
    \Big( \frac{1}{n} \Big) \E[z_2(t+\lfloor \frac{n}{2} \rfloor-1)]
    \geq \Big( \frac{1}{n} \Big)^2 \E[z_3(t+\lfloor \frac{n}{2} \rfloor-2)] \\ & \geq \dots \\ 
    &\geq \Big( \frac{1}{n} \Big)^{\lfloor \frac{n}{2}\rfloor - 1} \E[z_{\lfloor \frac{n}{2}\rfloor}(t+\lfloor \frac{n}{2}\rfloor - (\lfloor \frac{n}{2}\rfloor - 1))] = 
    \Big( \frac{1}{n} \Big)^{\lfloor \frac{n}{2}\rfloor - 1}\E[z_{\lfloor \frac{n}{2}\rfloor}(t+1)].
\end{align*}
By plugging the above inequality in inequality \ref{eq:UpperBound}
we get that
\begin{align*}
\E[z_{\lfloor \frac{n}{2} \rfloor}&(t+\lfloor \frac{n}{2} \rfloor+1)] \le \E[z_{\lfloor \frac{n}{2} \rfloor}(t+1)]- \frac{\E[z_1(t+\lfloor \frac{n}{2} \rfloor)]}{n}\\ &\le
\E[z_{\lfloor \frac{n}{2} \rfloor}(t+1)]
-
\Big( \frac{1}{n} \Big)^{\lfloor \frac{n}{2}\rfloor - 1}\E[z_{\lfloor \frac{n}{2}\rfloor}(t+1)] = \Bigg(1-\Big(\frac{1}{n} \Big)^{\lfloor \frac{n}{2}\rfloor - 1} \Bigg)\E[z_{\lfloor \frac{n}{2}\rfloor}(t+1)]
\end{align*}
Because $\Bigg(1-\Big(\frac{1}{n} \Big)^{\lfloor \frac{n}{2}\rfloor - 1} \Bigg) < 1$ and does not depend on $t$,
we get that
$
    \lim_{t\to\infty}\E[z_{\lfloor \frac{n}{2}\rfloor}(t)]=0.
$. This means that
$
    \lim_{t\to\infty}\E[\phi_{\lfloor \frac{n}{2}\rfloor}(t)]=\Omega(n^2\E[W^2]).
$ \\ Let $Gap_{\lfloor \frac{n}{2}\rfloor}(t)=\max_{1\le i \le n}|x_i(t)-x_{i+\lfloor \frac{n}{2}\rfloor}(t)|$.
Note that: \\
$
Gap(t)^2\ge Gap_{\lfloor \frac{n}{2}\rfloor}(t)^2 \ge \frac{\phi_{\lfloor \frac{n}{2}\rfloor}(t)}{n}.
$
Hence
$
 \lim_{t\to\infty} \E[Gap(t)^2]=\Omega(n\E[W^2]).   
$

Unfortunately we are not able to obtain the lower bound on the gap, since our approach uses the fact
that the upper bounds on $k$-hop potentials are 'tight'. Since our potentials are quadratic, 
we are not able to derive any kind of lower on for the gap itself. Intuitively, this will be an issue
with any argument which uses convex potential.

\end{proof}


\section{Experimental Validation}
\label{sec:experiments}

On the practical side, we implemented our load balancing algorithm with unit weight increments on a cycle.
The results confirm our hypothesis that the gap
is of order $\Theta(\sqrt{n})$.
In Figure \ref{fig:firstfigure} we ran our experiment 100 times
and calculated average gap over the all runs.
$x$-axis shows number of balls thrown(which is the same as the number of increments) and $y$-axis is current average gap divided by $\sqrt{n}$.
The experiment shows that once the number of thrown balls is large enough,
the gap stays between $\sqrt{n}$ and $1.4\sqrt{n}$.

\begin{figure*}[ht!]%
\centering
\caption{The evolution of average gap divided by square root of $n$, where $n$ is the number of bins.}%
\label{fig:firstfigure}%
\includegraphics[width=\textwidth]{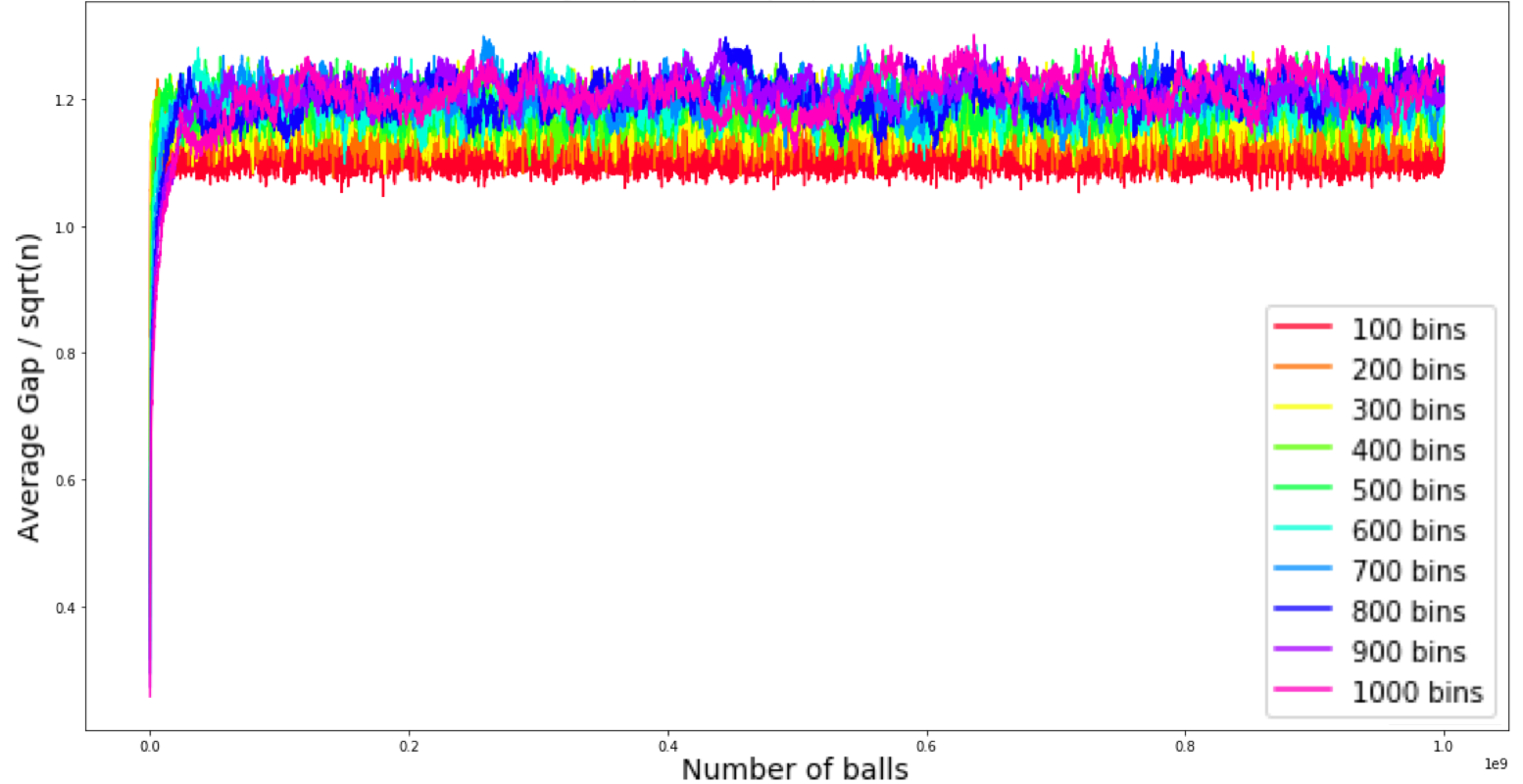}
\end{figure*}

\section{Discussion and Future Work}
\label{sec:discussion}

We have shown that in the case of dynamic averaging on a cycle
the gap between highest and lowest loaded bins is upper bounded by $O(\sqrt{n}\log{n})$ in expectation. Additionally we showed that 
the expected square of the gap is lower bounded by $\Omega(n)$.
It the future, it would be interesting to further tighten our results, matching our experimental analysis. 
We conjecture that the ``correct'' bound on the expected gap is of $\Theta(\sqrt{n})$. 
As already discussed, we also plan to extend our results to more general graph families, in particular grids graphs. 

\paragraph{Comparison of two-choice and averaging load balancing}
Finally, it is interesting to ask if it possible to extend our gap bounds in the case of the classic two-choice load balancing process. 
In particular, it is possible to show that the gap in the case of averaging process is always smaller
in expectation than the gap in the case of two choice process? 
Intuitively this should be the case, since the load balancing operation in the case of averaging can be 
viewed as picking up the random edge, incrementing the load of the endpoint with the smaller load 
and then  averaging the values. The extra averaging step should not make the gap larger. 
Indeed, the exponential potential used to analyse the gap in \cite{PTW}
can be used to upper bound the gap for averaging, since the exponential function is convex and averaging values 
does not increase it (this follows by Jensen's inequality). 

Unfortunately, it is not clear if averaging helps to actually \emph{decrease} the exponential potential. 
Additionally, this argument shows that averaging does not make the gap worse if applied to the particular technique of upper bounding the gap, and it is not clear if the gap itself is actually smaller, if we use averaging on top of two choice. We conjecture that there exists a majorization argument
which is based on \emph{how often} the process performs the averaging step. 
More precisely, we consider the setting where after the increment step (using two choice), we perform averaging with probability $\beta$.
The gap should decrease in expectation as we increase $\beta$. 
Note that the only result which lower bounds the gap for the two-choice on the cycle is the straightforward $\Omega(\log n)$ lower bound which can be shown for the clique~\cite{PTW}; 
so what makes  the existence of the majorization argument interesting is that it would allow us to show that the lower bound we derived on the second moment of the gap while always performing averaging step on the cycle ($\beta=1$) can be automatically used as the lower bound on the gap for two choice on the cycle ($\beta=0$). 
We plan to investigate this connection in future work. 

\section{Acknowledgments}

The authors sincerely thank Thomas Sauerwald for insightful discussions, and Mohsen Ghaffari, Yuval Peres, and Udi Wieder for feedback on earlier versions of this draft.

\appendix

\section{Upper Bound on the Gap, General Case} \label{sec:generalUB}
To prove the Theorem \ref{thm:GapUpperBound} for the general case, we need to redefine our sets $A_i^{k}$.
In order to do this, for each $k$ we define $2^k$ dimensional vector $\Delta_k=( \delta_k^1, \delta_k^2, ..., \delta_k^{2^k} )$.
For $k=0$, we have that $\Delta_k=(n)$.
For $\lfloor \log{n} \rfloor \ge k >0$ we set $\Delta_k=(\alpha_k, \delta_{k-1}^1-\alpha_k, \alpha_k, \delta_{k-1}^2-\alpha_k, ..., \alpha_k, \delta_{k-1}^{2^{k-1}}-\alpha_k)$.

Where, 
$$
\alpha_k=
\begin{cases}
\lfloor \frac{n}{2^{k-1}} \rfloor /2, \text{ if $\lfloor \frac{n}{2^{k-1}} \rfloor$ is even}. \\
\Big \lfloor \lceil \frac{n}{2^{k-1}} \rceil / 2 \Big \rfloor, \text{ otherwise}.\\
\end{cases}
$$

First we prove the following lemma:
\begin{lemma} \label{lem:LemmaWithProperties}
For any $\lfloor \log{n} \rfloor \ge k >0$, we have that

1. $\sum_{i=1}^{2^k} \delta_k^{i}=n$.

2. For any $1 \le i \le 2^k$, $\delta_k^i \in \{ \lceil \frac {n}{2^k} \rceil \, \lfloor \frac {n}{2^k} \rfloor\}$
(Notice that this means $\alpha_k=\lfloor \frac{n}{2^k}\rfloor$ or $\alpha_k=\lceil \frac{n}{2^k}\rceil$).

\end{lemma}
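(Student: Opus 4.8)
The plan is to establish both properties simultaneously by induction on $k$, using $\Delta_0=(n)$ as the anchor: its unique coordinate sums to $n$ and equals $\lfloor n/2^0\rfloor=\lceil n/2^0\rceil=n$, so both statements hold at level $0$ (the lemma is stated for $k\ge 1$). Assume now that $\Delta_{k-1}$ satisfies Properties 1 and 2.

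Property 1 for $\Delta_k$ is immediate from the recursion, which replaces each coordinate $\delta_{k-1}^j$ of $\Delta_{k-1}$ by the pair $\alpha_k,\ \delta_{k-1}^j-\alpha_k$; hence $\sum_{i=1}^{2^k}\delta_k^i=\sum_{j=1}^{2^{k-1}}\bigl(\alpha_k+(\delta_{k-1}^j-\alpha_k)\bigr)=\sum_{j=1}^{2^{k-1}}\delta_{k-1}^j=n$ by the inductive hypothesis.

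For Property 2 it suffices to show that $\alpha_k$ and every difference $\delta_{k-1}^j-\alpha_k$ lie in the two-element window $\{\lfloor n/2^k\rfloor,\lceil n/2^k\rceil\}$. I would use the standard nesting identities $\lfloor n/2^k\rfloor=\lfloor \lfloor n/2^{k-1}\rfloor/2\rfloor$ and $\lceil n/2^k\rceil=\lceil \lceil n/2^{k-1}\rceil/2\rceil$, together with the inductive hypothesis $\delta_{k-1}^j\in\{\lfloor n/2^{k-1}\rfloor,\lceil n/2^{k-1}\rceil\}$, and split on the parity of $\lfloor n/2^{k-1}\rfloor$ --- exactly the case split appearing in the definition of $\alpha_k$. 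If $\lfloor n/2^{k-1}\rfloor$ is even, then $\alpha_k=\lfloor n/2^{k-1}\rfloor/2=\lfloor n/2^k\rfloor$, and subtracting it from $\delta_{k-1}^j\in\{\lfloor n/2^{k-1}\rfloor,\lfloor n/2^{k-1}\rfloor+1\}$ gives a value in $\{\lfloor n/2^k\rfloor,\lfloor n/2^k\rfloor+1\}$; if the larger value actually occurs then $2^{k-1}$ does not divide $n$, hence neither does $2^k$, hence $\lceil n/2^k\rceil=\lfloor n/2^k\rfloor+1$ and the window is wide enough. If $\lfloor n/2^{k-1}\rfloor$ is odd, then $\alpha_k=\lfloor \lceil n/2^{k-1}\rceil/2\rfloor$, which equals $\lfloor n/2^k\rfloor$ or $\lfloor n/2^k\rfloor+1$ according to whether $2^{k-1}$ divides $n$; a parallel check --- again using that non-divisibility by $2^{k-1}$ forces the window to have size $2$ --- shows $\delta_{k-1}^j-\alpha_k$ stays in $\{\lfloor n/2^k\rfloor,\lceil n/2^k\rceil\}$ in both sub-cases. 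This closes the induction.

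The only real difficulty is keeping the floor/ceiling arithmetic straight across the parity case and its divisibility sub-cases; the point of the somewhat awkward piecewise definition of $\alpha_k$ is precisely that it makes each of these sub-cases go through, so beyond this careful bookkeeping no further idea is required.
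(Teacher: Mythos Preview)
Your proposal is correct and follows essentially the same approach as the paper: induction on $k$ with the telescoping argument for Property~1 and a case analysis for Property~2. Your primary split on the parity of $\lfloor n/2^{k-1}\rfloor$ followed by a secondary split on whether $2^{k-1}\mid n$ produces exactly the same four sub-cases the paper enumerates directly (writing $n/2^{k-1}$ as $2q$, $2q{+}1$, $2q{+}\epsilon$, or $2q{+}1{+}\epsilon$), and the floor/ceiling verifications in each sub-case are the same computations.
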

\begin{proof}
We prove the lemma using induction on $k$.
Base case $k=0$ holds trivially.
For the induction step, assume that Properties 1,2 and 3 hold for $k-1$, we aim to prove that they hold
for $k$ as well.
We have that $\sum_{i=1}^{2^k} \delta_k^{i}=\sum_{i=1}^{2^{k-1}} (\alpha_k+\delta_{k-1}^i-\alpha_k)=\sum_{i=1}^{2^{k-1}} \delta_{k-1}^i=n$.
To prove Property 2 we consider several cases:

\textbf{Case 1}.
$ \frac{n}{2^{k-1}} =2q$, for some integer $q$.

We have that $\alpha_k=q$, and hence for any $1 \le i \le 2^{k-1}$, $\delta_{k-1}^i-\alpha_k=q$.
Since $\lfloor \frac{n}{2^k} \rfloor=q$, Property 2 holds.

\textbf{Case 2}.
$ \frac{n}{2^{k-1}} =2q+1$, for some integer $q$.

We have that $\alpha_k=q$, and hence for any $1 \le i \le 2^{k-1}$, $\delta_{k-1}^i-\alpha_k=q+1$.
Since $\lfloor \frac{n}{2^k} \rfloor=q$ and $\lceil \frac{n}{2^k} \rceil=q+1$, Property 2 holds.

\textbf{Case 3}.
$\frac{n}{2^{k-1}} =2q+\epsilon$, for some integer $q$ and $0 < \epsilon < 1$.

We have that $\lfloor \frac{n}{2^{k-1}} \rfloor=2q$ and $\lceil \frac{n}{2^{k-1}} \rceil=2q+1$.
Additionally, $\alpha_k=q$, and hence for any $1 \le i \le 2^{k-1}$, $(\delta_{k-1}^i-\alpha_k) \in \{q, q+1\}$.
Since $\lfloor \frac{n}{2^k} \rfloor=q$ and $\lceil \frac{n}{2^k} \rceil=q+1$, Property 2 holds.

\textbf{Case 4}.
$ \frac{n}{2^{k-1}} =2q+1+\epsilon$, for some integer $q$ and $0 < \epsilon < 1$.

We have that $\lfloor \frac{n}{2^{k-1}} \rfloor=2q+1$ and $\lceil \frac{n}{2^{k-1}} \rceil=2q+2$.
Additionally, $\alpha_k=q+1$, and hence for any $1 \le i \le 2^{k-1}$, $(\delta_{k-1}^i-\alpha_k) \in \{q, q+1\}$.
Since $\lfloor \frac{n}{2^k} \rfloor=q$ and $\lceil \frac{n}{2^k} \rceil=q+1$, Property 2 holds.

\end{proof}

Next, for $\lfloor \log{n} \rfloor \ge k >0$  we set 
\begin{equation*}
A_i^k=\{i, i+\delta_k^1, i+\delta_k^1+\delta_k^2, ..., i+\sum_{j=1}^{2^k-1} \delta_k^j\}. 
\end{equation*}

It is easy to see that for any $\lfloor \log{n} \rfloor \ge k >0$ and $i$, we have that $|A_{k}^i|=2^k$, $A_{k}^i=A_{k-1}^i \cup A_{k-1}^{i+\alpha_k}$
and $A_{k-1}^i \cap A_{k-1}^{i+\alpha_k}=\emptyset$.
Also notice that 
for any $u \in A_{k-1}^i$, there exists $v \in A_{k-1}^{i+\alpha_k}$, such that $u+\alpha_k=v$ or $v+\alpha_k=u$
(For any $u \in A_{k-1}^{i+\alpha_k}$ there exists $v \in A_{k-1}^{i+\alpha_k}$ with the same property).

Next we prove the lemma which is similar to the lemma for $n=2^m$ case:

\begin{lemma} \label{lem:GeneralUnionOfSets}
For any $1 \le i \le n$ and $\lfloor \log{n} \rfloor \ge k > 0$, we have that
\begin{equation}
2Gap_{A_{k}^i}(t) \le 2 \max_{j \in A_{k}^i}|x_j(t) - x_{j+\alpha_k}(t)| + Gap_{A_{k-1}^{i+\alpha_k}}(t)+
Gap_{A_{k-1}^i}(t).
\end{equation}\end{lemma}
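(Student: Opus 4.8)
The plan is to mirror the proof of Lemma~\ref{lem:UnionOfSets} almost verbatim, substituting the hop $2^{k-1}$ by the offset $\alpha_k$ and the decomposition $A_{k-1}^i = A_k^i \cup A_k^{i+2^{k-1}}$ by its general-case counterpart $A_k^i = A_{k-1}^i \cup A_{k-1}^{i+\alpha_k}$, which (together with disjointness of the two halves) was recorded just before the statement. Fix a vertex $i$, let $u = \argmax_{j\in A_k^i} x_j(t)$ and $v = \argmin_{j\in A_k^i} x_j(t)$, so that $Gap_{A_k^i}(t) = x_u(t)-x_v(t)$, and split into four cases according to whether each of $u$ and $v$ lies in $A_{k-1}^i$ or in $A_{k-1}^{i+\alpha_k}$.

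If $u$ and $v$ lie in the same half, say both in $A_{k-1}^i$ (so $Gap_{A_{k-1}^i}(t)=Gap_{A_k^i}(t)$), I would use the structural observation that each of $u$ and $v$ has a partner in $A_{k-1}^{i+\alpha_k}$ reachable by a single $\pm\alpha_k$ step, and apply the triangle inequality to route $x_u - x_v$ through these two partners: this bounds $Gap_{A_k^i}(t)$ by $Gap_{A_{k-1}^{i+\alpha_k}}(t)$ plus two terms each of the form $|x_j(t)-x_{j+\alpha_k}(t)|$ with $j\in A_k^i$, hence at most $2\max_{j\in A_k^i}|x_j(t)-x_{j+\alpha_k}(t)|$; adding $Gap_{A_{k-1}^i}(t)=Gap_{A_k^i}(t)$ to both sides gives the claimed inequality. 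If $u$ and $v$ lie in different halves, I would instead bound $x_u - x_v$ in two ways — once by replacing $v$ with its partner and once by replacing $u$ with its partner — obtaining
\[
Gap_{A_k^i}(t) \le Gap_{A_{k-1}^i}(t) + \max_{j\in A_k^i}|x_j(t)-x_{j+\alpha_k}(t)|, \qquad Gap_{A_k^i}(t) \le Gap_{A_{k-1}^{i+\alpha_k}}(t) + \max_{j\in A_k^i}|x_j(t)-x_{j+\alpha_k}(t)|,
\]
and summing these two inequalities yields exactly $2\,Gap_{A_k^i}(t)\le 2\max_{j\in A_k^i}|x_j(t)-x_{j+\alpha_k}(t)| + Gap_{A_{k-1}^{i+\alpha_k}}(t)+Gap_{A_{k-1}^i}(t)$. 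As in Lemma~\ref{lem:UnionOfSets}, Cases~3 and~4 are symmetric to Cases~1 and~2 and I would just remark on this.

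The one genuinely delicate point — and the reason this cannot be a literal copy of Lemma~\ref{lem:UnionOfSets} — is verifying that every jump between the two halves is captured by a single step of length $\alpha_k$, in \emph{one direction or the other}, with the intermediate vertex lying in $A_k^i$; in the general case the partner of $j$ may be $j+\alpha_k$ or $j-\alpha_k$, so one must check that the relevant term is still of the form $|x_{j'}(t)-x_{j'+\alpha_k}(t)|$ for some $j'\in A_k^i$ (which it is, taking $j'=j$ or $j'$ the partner as appropriate). This is precisely the structural statement written immediately after the definition of $A_k^i$, which rests on Property~2 of Lemma~\ref{lem:LemmaWithProperties}: consecutive offsets $\delta_k^\ell$ take only the two values $\lfloor n/2^k\rfloor$ and $\lceil n/2^k\rceil$, and $\alpha_k$ equals one of them. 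Once this observation is fixed, the four cases are mechanical and identical in spirit to the $n=2^m$ proof, so the bulk of the write-up is simply the two representative cases with the $\pm\alpha_k$ sign bookkeeping made explicit.
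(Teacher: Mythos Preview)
Your proposal is correct and matches the paper's proof essentially line for line: the paper also takes $u,v$ as the argmax/argmin over $A_k^i$, splits into the same four cases on membership in $A_{k-1}^i$ versus $A_{k-1}^{i+\alpha_k}$, and handles the partner vertices via the ``$u+\alpha_k=u'$ or $u'+\alpha_k=u$'' observation recorded just before the lemma. Your explicit remark that the jump term is always of the form $|x_{j'}(t)-x_{j'+\alpha_k}(t)|$ for some $j'\in A_k^i$ (by choosing $j'$ to be whichever of the pair has the smaller index) is exactly the bookkeeping the paper leaves implicit.
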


\begin{proof}
Let $u=\argmax_{j \in A_{k-1}^i} x_j(t)$ and let $v =\argmin_{j \in A_{k-1}^i} x_j(t)$.
We consider several cases:

\textbf{Case 1}. $u \in A_{k-1}^{i}$ and $v \in A_{k-1}^{i}$.
Notice that in this case $Gap_{A_{k-1}^i}(t)=Gap_{A_{k}^i}(t)$.
Let $u' \in A_{k-1}^{i+\alpha_k}$ be the vertex such that $u+\alpha_k=u'$ or $u'+\alpha_k=u$ and
let $v' \in A_{k-1}^{i+\alpha_k}$ be the vertex such that $v+\alpha_k=v'$ or $v'+\alpha_k=v$.
We have that
\begin{align*}
    Gap_{A_{k}^i}(t)  &= |x_u(t) - x_v(t)| \\ &\leq |x_{u'}(t) - x_{u}(t)| + |x_{v'}(t) - x_{v}(t)| + |x_{u'}(t) - x_{v'}(t)| \\
    & \leq |x_{u'}(t) - x_{u}(t)| + |x_{v'} - x_{v}(t)| + Gap_{A_{k-1}^{i+\alpha_k}}(t) \\ 
    & \leq  2 \max_{j \in A_{k}^i}|x_j(t) - x_{j+\alpha_k}(t)| + Gap_{A_{k-1}^{i+2^{k-1}}}(t).
\end{align*}
This gives us that
\begin{equation}
2Gap_{A_{k}^i}(t) \le 2 \max_{j \in A_{k}^i}|x_j(t) - x_{j+\alpha_k}(t)| + Gap_{A_{k-1}^{i+\alpha_k}}(t)+
Gap_{A_{k-1}^i}(t).
\end{equation}

\textbf{Case 2}. $u \in A_{k-1}^{i}$ and $v \in A_{k-1}^{i+\alpha_k}$.
Let $u' \in A_{k-1}^{i+\alpha_k}$ be the vertex such that $u+\alpha_k=u'$ or $u'+\alpha_k=u$
and let $v' \in A_{k-1}^{i}$ be the vertex such that $v+\alpha_k=v'$ or $v'+\alpha_k=v$.
We have that:

\begin{align*}
    Gap_{A_{k}^i}(t) = |x_u(t) - x_v(t)| &\leq |x_u(t) - x_{v'}(t)| + |x_{v'}(t) - x_v(t)| \\
    & \leq Gap_{A_{k-1}^{i}}(t) +  \max_{j \in A_{k}^i}(|x_j(t) - x_{j+\alpha_k}(t)|)
\end{align*}
and
\begin{align*}
    Gap_{A_{k}^i}(t) = |x_u(t) - x_v(t)| &\leq |x_u(t) - x_{u'}(t)| + |x_{u'}(t) - x_v(t)| \\
    & \leq Gap_{A_{k-1}^{i+\alpha_k}}(t) +  \max_{j \in A_{k}^i}(|x_j(t) - x_{j+\alpha_k}(t)|)
\end{align*}
Hence, we again get that
\begin{equation}
2Gap_{A_{k}^i}(t) \le 2 \max_{j \in A_{k}^i}|x_j(t) - x_{j+\alpha_k}(t)| + Gap_{A_{k-1}^{i+\alpha_k}}(t)+
Gap_{A_{k-1}^i}(t).
\end{equation}

\textbf{Case 3}. $u \in A_{k-1}^{i+\alpha_k}$ and $v \in A_{k-1}^{i+\alpha_k}$, is similar to Case 1.

\textbf{Case 4}. $v \in A_{k-1}^{i}$ and $u \in A_{k-1}^{i+\alpha_k}$, is similar to Case 2.
\end{proof}

Next, we upper bound $\sum_{i=1}^n \max_{j \in A_{k}^i}|x_j(t) - x_{j+\alpha_k}(t)|.$

\begin{lemma} \label{lem:GeneralSumOfEdges}
\begin{equation}
\sum_{i=1}^{n} \max_{j \in A_{k}^i}|x_j(t) - x_{j+\alpha_k}(t)| \le 
\Big\lceil \frac{n}{\lfloor \frac{n}{2^k} \rfloor} \Big\rceil \sqrt{\lfloor \frac{n}{2^k}\rfloor}  \sqrt{\phi_{\alpha_k}(t)}
\end{equation}
\end{lemma}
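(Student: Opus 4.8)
The plan is to prove Lemma~\ref{lem:GeneralSumOfEdges} by a double-counting / Cauchy--Schwarz argument, in the same spirit as Lemma~\ref{lem:SumOfEdges}, but without relying on the sets $A_k^i$ being pairwise disjoint (which fails here). I would set $e_j := |x_j(t) - x_{j+\alpha_k}(t)|$ for $1 \le j \le n$, so that by definition $\phi_{\alpha_k}(t) = \sum_{j=1}^n e_j^2$. The structural fact I would use is that \emph{every} vertex $j$ lies in exactly $2^k$ of the sets $A_k^1, \dots, A_k^n$: writing $s_0 = 0$ and $s_\ell = \sum_{r=1}^{\ell}\delta_k^r$, Property~1 of Lemma~\ref{lem:LemmaWithProperties} gives $s_{2^k} = n$ while Property~2 gives $\delta_k^r \ge \lfloor n/2^k\rfloor \ge 1$, so $0 = s_0 < s_1 < \dots < s_{2^k-1} < n$ are pairwise distinct modulo $n$; since $A_k^i = \{i+s_0, \dots, i+s_{2^k-1}\}$, the condition $j \in A_k^i$ is equivalent to $i \equiv j - s_\ell \pmod n$ for some $\ell$, which has exactly $2^k$ solutions.

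With that in hand, the computation is short. First, for each fixed $i$ I would bound $\max_{j\in A_k^i} e_j = \sqrt{\max_{j\in A_k^i} e_j^2} \le \sqrt{\sum_{j\in A_k^i} e_j^2}$; summing over $i$ and applying Cauchy--Schwarz to the $n$ summands gives
\begin{equation*}
\sum_{i=1}^n \max_{j\in A_k^i} e_j \;\le\; \sum_{i=1}^n \sqrt{\sum_{j\in A_k^i} e_j^2} \;\le\; \sqrt{n}\,\sqrt{\sum_{i=1}^n \sum_{j\in A_k^i} e_j^2}.
\end{equation*}
Then I would exchange the order of summation and invoke the multiplicity fact above:
\begin{equation*}
\sum_{i=1}^n \sum_{j\in A_k^i} e_j^2 \;=\; \sum_{j=1}^n e_j^2\,\bigl|\{\, i : j \in A_k^i \,\}\bigr| \;=\; 2^k\sum_{j=1}^n e_j^2 \;=\; 2^k\,\phi_{\alpha_k}(t),
\end{equation*}
which yields $\sum_{i=1}^n \max_{j\in A_k^i} e_j \le \sqrt{n\,2^k}\,\sqrt{\phi_{\alpha_k}(t)}$.

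Finally, I would deduce the stated inequality by showing $\sqrt{n\,2^k} \le \lceil n/\lfloor n/2^k\rfloor\rceil\,\sqrt{\lfloor n/2^k\rfloor}$. Setting $q := \lfloor n/2^k\rfloor$ — a positive integer since $2^k \le n$ (as $k \le \lfloor\log n\rfloor$) — we have $q \le n/2^k$, hence $2^k \le n/q$, and therefore $n\,2^k/q \le (n/q)^2 \le \lceil n/q\rceil^2$; taking square roots and multiplying by $\sqrt{q}$ finishes it. I expect the multiplicity count in the first paragraph to be the only real point needing care — it is exactly where Properties~1 and~2 of Lemma~\ref{lem:LemmaWithProperties} are needed, to ensure the partial sums $s_0, \dots, s_{2^k-1}$ are genuinely distinct residues modulo $n$; the rest is routine. (One could instead imitate the $n = 2^m$ proof by passing first to the list of \emph{distinct} sets among the $A_k^i$, but since these are no longer pairwise disjoint, the double-counting route avoids having to track how many times each edge $\{j,j+\alpha_k\}$ is reused.)
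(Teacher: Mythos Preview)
Your proof is correct and takes a genuinely different route from the paper's. The paper does not use double counting: instead it observes that for any starting index $u$, the $q := \lfloor n/2^k\rfloor$ consecutive sets $A_k^u,\dots,A_k^{u+q-1}$ are pairwise disjoint (since all $\delta_k^r \ge q$), applies Cauchy--Schwarz on each such block of length $q$ to get $\sum_{i=u}^{u+q-1}\max_{j\in A_k^i}e_j \le \sqrt{q}\,\sqrt{\phi_{\alpha_k}(t)}$, and then covers $\{1,\dots,n\}$ with $\lceil n/q\rceil$ such blocks to obtain the stated bound directly.

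Your argument replaces the block-disjointness step by the exact multiplicity count $|\{i:j\in A_k^i\}|=2^k$, which is a clean observation and yields the slightly sharper intermediate bound $\sqrt{n\,2^k}\,\sqrt{\phi_{\alpha_k}(t)}$ before you relax it to the stated form; the paper's block argument lands on $\lceil n/q\rceil\sqrt{q}\,\sqrt{\phi_{\alpha_k}(t)}$ without this detour. Both rely on the same structural input (Property~2 giving $\delta_k^r\ge q$), just used differently: the paper uses it for local disjointness, you use it to show the partial sums $s_0,\dots,s_{2^k-1}$ are distinct residues. For the downstream application only the order $O(\sqrt{n\,2^k})$ matters, so either works; your route is arguably tidier and avoids the slight awkwardness of the last, possibly short, block.
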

\begin{proof}
Notice that for any $1\le u \le n$ and sets $A_k^u, A_k^{u+1}, ..., A_k^{u+\lfloor \frac{n}{2^k} \rfloor-1}$
are disjoint, because for any $1 \le j \le 2^k$, $\delta_k^j \ge \lfloor \frac{n}{2^k} \rfloor$
(This means that for any $1 \le i \le n$, distances between consecutive vertices in $A_k^i$ are at least $\lfloor \frac{n}{2^k} \rfloor$).
Using this fact and Cauchy-Schwarz inequality we get that
\begin{align*}
&\sum_{i=u}^{u+\lfloor \frac{n}{2^k} \rfloor-1} \max_{j \in A_{k}^i}|x_j(t) - x_{j+\alpha_k}(t)|  \\&\le 
\sqrt{\lfloor \frac{n}{2^k}\rfloor} \sqrt{\sum_{i=u}^{u+\lfloor \frac{n}{2^k} \rfloor-1}
\max_{j \in A_{k}^{i}}|x_j(t) - x_{j+\alpha_k}(t)|^2} \\&\le
\sqrt{\lfloor \frac{n}{2^k}\rfloor} \sqrt{\sum_{j=1}^n |x_j(t) - x_{j+\alpha_k}(t)|^2}=\sqrt{\lfloor \frac{n}{2^k}\rfloor}  \sqrt{\phi_{\alpha_k}(t)}
\end{align*}
Since the above inequality holds for any $u$ we can write that:
\begin{equation}
\sum_{i=1}^{n} \max_{j \in A_{k}^i}|x_j(t) - x_{j+\alpha_k}(t)| \le \Big\lceil \frac{n}{\lfloor \frac{n}{2^k} \rfloor} \Big\rceil \sqrt{\lfloor \frac{n}{2^k}\rfloor}  \sqrt{\phi_{\alpha_k}(t)}
\end{equation}

\end{proof}
With the above lemmas in place,
we are ready to prove Theorem \ref{thm:GapUpperBound} for general $n$.

From Lemma \ref{lem:GeneralUnionOfSets} we have that 
\begin{align*}
\sum_{i=1}^n 2Gap_{A_{k}^i(t)} &\le \sum_{i=1}^n Gap_{A_{k-1}^i}(t)+\sum_{i=1}^n Gap_{A_{k-1}^{i+\alpha_k}}(t)\\& \quad \quad \quad +
\sum_{i=1}^n 2 \max_{j \in A_{k}^i}|x_j(t) - x_{j+\alpha_k}(t)| \\&=
2\sum_{i=1}^n Gap_{A_{k-1}^i}(t)+2\sum_{i=1}^n \max_{j \in A_{k}^i}|x_j(t) - x_{j+\alpha_k}(t)|.
\end{align*}
After dividing the above inequality by 2 and applying Lemma \ref{lem:GeneralSumOfEdges}:
we get that:
\begin{align*}
\sum_{i=1}^n Gap_{A_{k}^i}(t) \le 
\sum_{i=1}^n Gap_{A_{k-1}^i}(t)+\Big\lceil \frac{n}{\lfloor \frac{n}{2^k} \rfloor} \Big\rceil \sqrt{\lfloor \frac{n}{2^k}\rfloor}  \sqrt{\phi_{\alpha_k}(t)}.
\end{align*}
Notice that for any $i$, $Gap_0^i(t)=0$.
Hence, we get that 
\begin{align*}
\sum_{i=1}^n Gap_{A_{\lfloor \log{n} \rfloor}^i} (t) \le 
\sum_{k=1}^{\lfloor \log{n} \rfloor} \Big\lceil \frac{n}{\lfloor \frac{n}{2^k} \rfloor} \Big\rceil \sqrt{\lfloor \frac{n}{2^k}\rfloor}  \sqrt{\phi_{\alpha_k}(t)}
.
\end{align*}
Let $i'=\argmin_i{Gap_{A_{\lfloor \log{n} \rfloor}^i} (t)}$.
Notice that consecutive vertices in $A_{\lfloor \log{n} \rfloor}^{i'}$ are 1 or 2 edges apart,
hence for any $1 \le i \le n$, either $i \in A_{\lfloor \log{n} \rfloor}^{i'}$ or $i+1 \in A_{\lfloor \log{n} \rfloor}^{i'}$.
This gives us that 
\begin{align*}
Gap(t) &\le Gap_{A_{\lfloor \log{n} \rfloor}^{i'}} (t)+\max_{i} |x_i(t)-x_{i+1}(t)| \\
&= Gap_{A_{\lfloor \log{n} \rfloor}^{i'}} (t)+\sqrt{\max_{i} |x_i(t)-x_{i+1}(t)|^2} \le
Gap_{A_{\lfloor \log{n} \rfloor}^{i'}} (t)+\sqrt{\phi_1(t)}.
\end{align*}
By combining the above two inequalities we get that
\begin{align*}
n Gap(t) &\le n Gap_{A_{\lfloor \log{n} \rfloor}^{i'}} (t)+n\sqrt{\phi_1(t)} \le  
\sum_{i=1}^n Gap_{A_{\lfloor \log{n} \rfloor}^i} (t)+n\sqrt{\phi_1(t)} \\ &\le
\sum_{k=1}^{\lfloor \log{n} \rfloor}  \Big\lceil \frac{n}{\lfloor \frac{n}{2^k} \rfloor} \Big\rceil \sqrt{\lfloor \frac{n}{2^k}\rfloor}  \sqrt{\phi_{\alpha_k}(t)}+n\sqrt{\phi_1(t)}.
\end{align*}

Next, we apply Jensen's inequality and Lemma \ref{thm:UpperBoundOnPotentials} (We are going to use
a looser upper bound: $\E[\phi_i(t)] \le i(n-i)-1 \le in)$
\begin{align*}
n \E[Gap(t)]  &\le
n \E\sqrt{[\phi_{1}(t)]}+\sum_{k=1}^{\lfloor \log{n} \rfloor} \Big\lceil \frac{n}{\lfloor \frac{n}{2^k} \rfloor} \Big\rceil \sqrt{\lfloor \frac{n}{2^k}\rfloor}  \E\sqrt{\phi_{\alpha_k}(t)}
\\ &\le
n\sqrt{\E[\phi_{1}(t)]}+\sum_{k=1}^{\lfloor \log{n} \rfloor} \Big\lceil \frac{n}{\lfloor \frac{n}{2^k} \rfloor} \Big\rceil \sqrt{\lfloor \frac{n}{2^k}\rfloor}  \sqrt{\E[\phi_{\alpha_k}(t)]}
\\ &\le
n\sqrt{n}+\sum_{k=1}^{\lfloor \log{n} \rfloor} \Bigg( \Big\lceil \frac{n}{\lfloor \frac{n}{2^k} \rfloor} \Big\rceil \sqrt{\lfloor \frac{n}{2^k}\rfloor}  \sqrt{\alpha_k n}
=O(n\sqrt{n}\log{n}).
\end{align*}

This completes the proof.

\section{Harary Graph, upper bound on the gap}
In this section we show that our approach can be used to upper bound the gap for the Harary graph with $n$ vertices:
That is each vertex $i$ is connected with edge to vertices $i-1$, $i+1$ (called cycle edges), $i-2$ and $i+2$ (called
extra edges).
As before the operation consists of
picking an edge u.a.r and doing increment and averaging (For the simplicity we assume that increments have unit weights, the result can be extended to the random weights, in the similar fashion to the cycle case).
After careful calculations which mimic the calculations for the cycle case, but 
by taking extra edges of the Harary graph into the account, we can derive the following equations for 
the hop potentials (Hops are counted over the cycle edges):

\[
\begin{cases}
\mathbb{E}[\phi_1(t+1)] =  \frac{n-2}{n}\E[\phi_1(t)]+\frac{3}{4}+\frac{\E[\phi_1(t)]}{4n}+
\frac{\E[\phi_3(t)]}{2n}\\
\mathbb{E}[\phi_2(t+1)] =  \frac{n-2}{n}\E[\phi_2(t)]+\frac{3}{4}-\frac{\E[\phi_2(t)]}{4n}+
\frac{\E[\phi_3(t)]}{2n}+\frac{\E[\phi_4(t)]}{2n}\\
\dots \\
\mathbb{E}[\phi_{k}(t+1)] =  \frac{n-2}{n}\E[\phi_k(t)]+1-\frac{\E[\phi_1(t)]}{2n}-\frac{\E[\phi_2(t)
]}{2n}+
\frac{\E[\phi_{k-2}(t)]}{2n}+\frac{\E[\phi_{k-1}(t)]}{2n}\\ \quad \quad \quad \quad
\quad \quad \quad \quad + \frac{\E[\phi_{k+1}(t)]}{2n}+
\frac{\E[\phi_{k+2}(t)]}{2n} \\
\end{cases}
\]
Recall that for the cycle potential $\phi_k(t+1)$ depends on the potentials $\phi_k(t)$, $\phi_{k+1}(t)$, 
$\phi_{k-1}(t)$ and $\phi_1(t)$. In the case of Harary graph $\phi_k(t+1)$ depends on the potentials $\phi_{k+2}(t)$, $\phi_{k+1}(t)$, $\phi_k(t)$, $\phi_{k-1}(t)$, $\phi_{k-2}(t)$, $\phi_1(t)$ and $\phi_2(t)$.
The reason is that we are to able to perform load balancing operation on two hop neighbours. 
Similarly, if we have a graph where each vertex is connected with all vertices which are at hop distance at most $\ell$ (This is also Harary graph, but with different parameter), then
$\phi_k(t+1)$ will depend on $\phi_{k+\ell}(t)$, $\dots$, $\phi_{k+1}(t)$, $\phi_{k}(t)$, $\phi_{k-1}(t)$, $\dots$, $\phi_{k-\ell}(t)$, and $\phi_1(t)$, $\phi_2(t)$, $\dots$, $\phi_{\ell}(t)$.
Next step is to find the stationary points for the hop potentials. That is:
the values which stay the same after we apply step given by the above equations.
As before these values will be used as the upper bounds for the expected values of potentials.
In the case of Harary Graph (with 2 hop edges), we get that for every $t$ and $k$:
\begin{equation}
\E[\phi_k(t)] \le \frac{2}{5}k(n-k)+\alpha.
\end{equation}
Here extra term $\alpha$ has a closed form, which we omit and instead concentrate on the property that it is upper bounded by $2n$.
Observe that since Harary graph contains cycle and we defined our hop potentials
based on the hop counts of that cycle, we can upper bound the gap by using:
\begin{equation}
\E[Gap(t)] \le \sum_{k=0}^m \frac{1}{\sqrt{2^{k-1}}}\sqrt{\phi_{2^{k-1}}(t)} \le O(\sqrt{n}\log{n}),
\end{equation}
Notice  that if $k \ge 40$,
 the upper bound for $\E[\phi_k(t)]$ is two times smaller than the upper bound for the cycle case.
Hence, we can use this to slightly improve the constant hidden by big $O$ notation in the upper bound.
 \end{document}